\newcommand{\IFTHEN}[3][default]{\ALC@it\algorithmicif\ #2\
  \algorithmicthen\ #3\
  \ifthenelse{\boolean{ALC@noend}}{}{\algorithmicendif\ } \ALC@com{#1}}
\newcommand{\IFTHENELSE}[3]{\ALC@it\algorithmicif\ #1\
  \algorithmicthen\ #2\
  \algorithmicelse\ #3\
  \ifthenelse{\boolean{ALC@noend}}{}{\algorithmicendif\ } }
\newcommand{\IFTHENEND}[2]{\ALC@it\algorithmicif\ #1\
  \algorithmicthen\ #2\ \algorithmicendif}
\def\fnum@figure{Protocol \thefigure}
\newlength{\spaceafterprotocol}
\DeclareMathOperator{\rank}{rank}
\newtheorem{theorem}{Theorem}
\newtheorem{remark}{Remark}
\newtheorem{corollary}{Corollary}
\newtheorem{lemma}{Lemma}
\newtheorem{proposition}{Proposition}
\newcommand{\F}{\ensuremath{\mathbb{F}}}
\newcommand{\RPM}[1]{\ensuremath{\mathcal{R}_{#1}}\xspace}
\newcommand{\SO}[1]{\ensuremath{\widetilde{O}({#1})}\xspace}
\newcommand{\GO}[1]{\ensuremath{O ({#1})}\xspace}
\newcommand{\sample}[1]{\ensuremath{\xleftarrow{\$} {#1}}\xspace}
\LetLtxMacro\orgvdots\vdots
\LetLtxMacro\orgddots\ddots
\DeclareRobustCommand\vdots{%
  \mathpalette\@vdots{}%
}
\newcommand*{\@vdots}[2]{%
  \sbox0{$#1\cdotp\cdotp\cdotp\m@th$}%
  \sbox2{$#1.\m@th$}%
  \vbox{%
    \dimen@=\wd0 %
    \advance\dimen@ -3\ht2 %
    \kern.5\dimen@
    \dimen@=\wd2 %
    \advance\dimen@ -\ht2 %
    \dimen2=\wd0 %
    \advance\dimen2 -\dimen@
    \vbox to \dimen2{%
      \offinterlineskip
      \copy2 \vfill\copy2 \vfill\copy2 %
    }%
  }%
}
\DeclareRobustCommand\ddots{%
  \mathinner{%
    \mathpalette\@ddots{}%
    \mkern\thinmuskip
  }%
}
\newcommand*{\@ddots}[2]{%
  \sbox0{$#1\cdotp\cdotp\cdotp\m@th$}%
  \sbox2{$#1.\m@th$}%
  \vbox{%
    \dimen@=\wd0 %
    \advance\dimen@ -3\ht2 %
    \kern.5\dimen@
    \dimen@=\wd2 %
    \advance\dimen@ -\ht2 %
    \dimen2=\wd0 %
    \advance\dimen2 -\dimen@
    \vbox to \dimen2{%
      \offinterlineskip
      \hbox{$#1\mathpunct{.}\m@th$}%
      \vfill
      \hbox{$#1\mathpunct{\kern\wd2}\mathpunct{.}\m@th$}%
      \vfill
      \hbox{$#1\mathpunct{\kern\wd2}\mathpunct{\kern\wd2}\mathpunct{.}\m@th$}%
    }%
  }%
}
\newenvironment{smatrix}{\left[\begin{smallmatrix}}{\end{smallmatrix}\right]}
\newcommand{\checks}[1]{\ensuremath{\mathrel{\stackrel{?}{#1}}}}
\newcommand{\minor}[3]{\ensuremath{\left [#1\right]^{\{#2\}}_{\{#3\}}}}
\newcommand{\OpenDreamKit}{the \href{http://opendreamkit.org}{OpenDreamKit} \href{https://ec.europa.eu/programmes/horizon2020/}{Horizon 2020} \href{https://ec.europa.eu/programmes/horizon2020/en/h2020-section/european-research-infrastructures-including-e-infrastructures}{European Research Infrastructures} project (\#\href{http://cordis.europa.eu/project/rcn/198334_en.html}{676541})}
\title{Certificates for triangular equivalence and rank profiles\texorpdfstring{\thanks{This work is partly funded by \OpenDreamKit.}}{}}
\begin{document}

\numberofauthors{3}

\def\sharedaffiliation{%
\end{tabular}
\begin{tabular}{c}}

\author{
\alignauthor
Jean-Guillaume Dumas\\
\alignauthor
David Lucas\\
\alignauthor
Cl\'ement Pernet\\
      \sharedaffiliation
    \affaddr{Universit\'e Grenoble Alpes, Laboratoire Jean Kuntzmann, CNRS, UMR 5224}\\
    \affaddr{700 avenue centrale, IMAG - CS 40700, 38058 Grenoble cedex 9, France}\\
    \email{\{firstname.lastname\}@univ-grenoble-alpes.fr}
}
 
\maketitle

\begin{abstract}
In this paper, we give novel certificates for triangular equivalence
and rank profiles. 
These certificates enable to verify the row or column rank profiles or the
whole rank profile matrix faster than
recomputing them, with a negligible overall overhead.
We first provide quadratic time and space non-interactive certificates
saving the logarithmic factors of previously known ones.
Then we propose interactive certificates for the same problems
whose Monte Carlo verification complexity requires a small constant
number of matrix-vector
multiplications, a linear space, and a linear number of extra field operations.
As an application we also give an interactive protocol, certifying the
determinant of dense matrices, faster than the best previously known one.
\end{abstract}

\section{Introduction}
Within  the setting of verifiable computing, we propose in this paper
{\em interactive certificates} with the taxonomy
of~\cite{dk14}.
Indeed, we consider a protocol where a {\em Prover} performs a
computation and provides additional data structures or exchanges with a
{\em Verifier} who will use these to 
check the validity of the result, faster than by just recomputing it.
%
More precisely, in an interactive certificate,
the Prover submits
a {\em Commitment}, that is some result of a computation;
the Verifier answers by
a {\em Challenge}, usually some uniformly sampled random values;
the Prover then answers with
a {\em Response}, that the Verifier can use to convince himself of the validity
of the commitment. 
Several {\em rounds} of challenge/response might be necessary
for the Verifier to be fully convinced.

By Prover (resp. Verifier) {\em time}, we thus mean bounds on the number of arithmetic
operations performed by the Prover (resp. Verifier) during the
protocol, while by extra {\em space}, we mean bounds on the volume of
data being exchanged, not counting the size of the input and output of the computation. 

Such protocols are said to be {\em complete} if the probability
that a true statement is rejected by the Verifier can be made arbitrarily
small; and {\em sound} if the probability that a false
statement is accepted by the Verifier can be made arbitrarily small.
In practice it is sufficient that those probabilities are $<1$, as the protocols
can always be run several times.
Some certificates will also be {\em perfectly complete}, that
is a true statement is never rejected by the Verifier. 
%
All these certificates can be simulated non-interactively by Fiat-Shamir
heuristic~\cite{Fiat:1986:Shamir}: uniformly sampled random values
produced by the Verifier are replaced by cryptographic hashes of the
input and of previous messages in the protocol. Complexities are
preserved.

We do not use generic approaches to verified computation
(where protocols check circuits with polylogarithmic
depth~\cite{Goldwasser:2008:delegating} or use amortized
models and homomorphic
encryption~\cite{Costello:2015:gepetto}).
Rather, we use dedicated certificates
as those designed for
dense~\cite{freivalds79,kns11} or
sparse~\cite{dk14,jgd:2016:gammadet} exact linear
algebra.
The obtained certificates are problem-specific,
but try to reduce as much as possible the
overhead for the Prover, while preserving a fast verification procedure.

We will consider an $m\times n$  matrix $A$ of rank $r$ over a field $\F$.
%
The \emph{row rank profile}
of $A$ is the lexicographically minimal sequence of $r$ indices of independent
rows
of $A$.
Matrix $A$ has \emph{generic row
  rank profile} if its row
rank profile is $(1, \dots, r)$.
The \emph{column rank profile} is defined similarly on the columns of $A$.
Matrix $A$ has generic rank profile if its $r$ first leading principal minors
are nonzero. 
The \textit{rank profile matrix} of $A$, denoted by $\mathcal{R}_A$ is the
unique $m \times n$ $\{0,1\}$-matrix with $r$ nonzero entries, of which every
leading sub-matrix has the same rank 
as the corresponding sub-matrix of $A$. It is possible to compute
$\mathcal{R}_A$ with a deterministic algorithm in
$\GO{mnr^{\omega-2}}$ or with a Monte-Carlo probabilistic algorithm in $(r^{\omega} + m + n + \mu(A))^{1 + o(1)}$ field operations \cite{dps16},
where $\mu(A)$ is the arithmetic cost to multiply $A$ by a vector.

We first propose quadratic, space and verification time,
  non-in\-ter\-ac\-tive practical certificates for the row or column rank profile and
  for the rank profile matrix that are rank-sensitive. Previously
  known certificates have additional logarithmic factors to the
  qua\-dra\-tic complexities: replacing matrix multiplications by
  qua\-dra\-tic verifications in recursive algorithms yields at least one $\log(n)$
  factor~\cite{kns11}, graph-based approaches cumulate this and other
  logarithmic factors, at least from a compression by magical graphs
  and from a dichotomic search~\cite{sy15}.  

  We then propose two  linear space interactive
  certificates: one certifying that two non-singular matrices are
  triangular equivalent, i.e. there is a triangular change of basis from one to
  the other; the other one, certifying that a matrix has a generic rank profile.
  These certificates are then applied to certify the row or column rank profile,
  the $Q$ (permutation) and $D$ (diagonal) factors of a LDUP factorization, the
  determinant and the rank profile matrix.
  These certificates require, for the Verifier, between 1 and 3 applications of
  $A$ to a vector and a linear amount of field operations. They are still
  elimination-based for the Prover, but do not require to 
  communicate the obtained triangular decomposition. 
%
For the Determinant, this new certificates require the computation of a PLUQ
decomposition for the Prover, linear communication and Verifier time, with no
restriction on the field size.

Table~\ref{tab:io} compares linear quadratic volumes of communication, as
well as sub-cubic (PLUQ, {\sc{CharPoly}}) or quadratic matrix operations (one
matrix-vector multiplication with a dense matrix is denoted \texttt{fgemv}). 
The results shows first that it is interesting to use linear space certificates
even when they have quadratic Verification time. 
The table also presents a practical constant
factor of about 5 between PLUQ and {\sc{CharPoly}} computations. 
Computations use the
FFLAS-FFPACK library (\url{http://linbox-team.github.io/fflas-ffpack})
on a single Intel Skylake core @3.4GHz, while we measured some communications
between two workstations over an Ethernet Cat. 6, @1Gb/s network cable.

\begin{table}[htbp]
  \centering
  \begin{tabular}{lrrr}
    \toprule
    Dimension		& $2k$& $10k$&$50k$ \\
    \midrule
    PLUQ		& 0.28s	& 17.99s	& 1448.16s\\
    {\sc{CharPoly}}	& 1.96s & 100.37s & 8047.56s\\
    \midrule
    Linear comm.	& 0.50s	& 0.50s	& 0.50s\\
    Quadratic comm.	& 1.50s	& 7.50s	& 222.68s\\
    \midrule
    \texttt{fgemv}	& 0.0013s& 0.038s& 1.03s\\
    \bottomrule
  \end{tabular}
  \caption{Communication of 64 bit words versus computation modulo
    $131071$}\label{tab:io}
\end{table}

A summary of our contributions is given in
Table~\ref{contributions}, to be compared with the state of the art
in Table~\ref{SotA_matrix_rank}.
\begin{table*}[htbp]\small
    \centering
    \begin{tabular}{llllllll}
        \toprule
         & Algorithm & Inter.  &  \multicolumn{2}{c}{Prover} & \multirow{2}{*}{Communication} &
        Probabilistic & \multirow{2}{*}{$\#\F$} \\
        \cmidrule{4-5}
        & & & Determ.& Time & & Verifier Time\\
        \midrule
        \multirow{3}{*}{\sc{Rank}} &\cite{kns11} over \cite{ckl13} & No & No &$\SO{r^{\omega} + \mu(A)}$ & $\SO{r^2+m+n}$ & $\SO{r^{2} + \mu(A)}$&$\geq 2$ \\
        &\cite{dk14} & Yes & No & $O(n(\mu(A)+n))$ & $O(m+n)$ & $2\mu(A) + \SO{m + n}$& $\SO{min\{m,n\}}$ \\
        &\cite{eberly15} & Yes & Yes & $O(mnr^{\omega -2})$ & $O(m+r)$ & $O(r + \mu(A) + m + n)$ &$\geq 2$\\
        \midrule
        \multirow{2}{*}{CRP/RRP} & \cite{kns11} over \cite{sy15} & No & No
        &$\SO{r^{\omega} +m+n+ \mu(A)}$ & $\SO{r^2+m+n}$ & $\SO{r^2+ m + n +
          \mu(A)}$ & $\SO{min\{m,n\}}$\\
         & \cite{kns11} over \cite{jps13} & No & Yes &$O(mnr^{\omega-2})$ & $\SO{mn}$ & $\SO{mn}$ &$\geq 2$\\
        \midrule
       \multirow{2}{*}{RPM} & \cite{kns11} over \cite{dps16} & No & No &$\SO{r^{\omega} + m + n + \mu(A)}$ & $\SO{r^2+m+n}$ & $\SO{r^2 + m + n + \mu(A)}$ & $\SO{min\{m,n\}}$\\
        & \cite{kns11} over \cite{DPS:2013} & No & Yes &$O(mnr^{\omega-2})$ & $\SO{mn}$ & $\SO{mn}$ &$\geq 2$\\
       \midrule
            \multirow{2}{*}{\textsc{Det}}& \cite{freivalds79} \& PLUQ  & No & Yes & $O(n^\omega)$ & $O(n^2)$ & $O(n^2) + \mu(A)$ &$\geq 2$ \\
            & \cite{jgd:2016:gammadet} \& {\sc{CharPoly}}   &
        Yes &No &
        $O(n\mu(A))$ or $O(n^\omega)$  & $O(n)$ & $\mu(A)+O(n)$  &$\geq n^2$\\
        \bottomrule
    \end{tabular}
    \caption{State of the art certificates for the rank, the row and column rank
      profiles, the rank profile matrix and the determinant}
    \label{SotA_matrix_rank}
\end{table*}
\begin{table*}[htbp]\small
    \centering
    \begin{tabular}{llllllll}
        \toprule
        &Algorithm & Interactive  & \multicolumn{2}{c}{Prover} & \multirow{2}{*}{Communication} &
        Probabilistic & \multirow{2}{*}{$\#\F$}\\
        \cmidrule{4-5}
        & & & Deterministic & Time & & Verifier Time\\
        \midrule
        \multirow{2}{*}{CRP/RRP} & \S~\ref{sec:noninterractive:CRP}& No & Yes
        &$O(mnr^{\omega-2})$ & $O(r(m+n))$ & $O(r(m+n))+ \mu(A)$ &$\geq 2$\\
        & \S~\ref{sec:interractive:CRP} & Yes & Yes & $O(mnr^{\omega-2})$ & $O(m+n)$ & $2\mu(A)+O(m+n)$ &$\geq 2$ \\
        \midrule
        \multirow{2}{*}{RPM} & \S~\ref{sec:noninterractive:RPM} & No & Yes &$O(mnr^{\omega-2})$  & $O(r(m+n))$ & $O(r(m+n)) + \mu(A)$ &$\geq 2$ \\
        & \S~\ref{sec:interractive:RPM}& Yes & Yes & $O(mnr^{\omega-2})$ & $O(m+n)$ & $4\mu(A)+O(m+n)$ &$\geq 4$ \\
        \midrule
        {\sc{Det}}& \S~\ref{sec:det} \& PLUQ & Yes & Yes & $O(n^\omega)$ & $O(n)$ & $\mu(A)+O(n)$  &$\geq 2$\\
        \bottomrule
    \end{tabular}
    \caption{This paper's contributions}
    \label{contributions}
\end{table*}
%
%
We identify the symmetric group with the group of permutation matrices, and
write $P\in \mathcal{S}_n$ to denote that a matrix $P$ is a permutation
matrix. There, $P[i]$ is the row index of the nonzero element 
of its $i$-th column; $\mathcal{D}_n(\F)$ is the group of
invertible diagonal matrices over the field $\F$ and $[A]^I_J$ is the
$(I,J)$-minor of the matrix $A$ (the determinant of the submatrix of
$A$ with row indices in $I$ and column indices in~$J$).
Lastly, $x \sample{S}$ denotes that $x$  is sampled uniformly at random
from~$S$.
\section{Non interactive and quadratic communication certificates}
\label{sec:noninterractive}
In this section, we propose two certificates, first for the column (resp. row)
rank profile, and, second, for the rank profile matrix.
While the certificates have a quadratic space communication complexity, they have the advantage
of being non-interactive.

\subsection{Freivalds' certificate for matrix product}
\label{subsec:freivalds}

In this paper, we will use Freivalds' certificate \cite{freivalds79} to verify matrix multiplication.
Considering three matrices $A, B$ and $C$ in $\mathbb{F}^{n \times n}$, such that 
$A \times B = C$, a straightforward way of verifying the equality would be to perform the multiplication
$A \times B$ and to compare its result coefficient by coefficient with $C$. While this method is
deterministic, it has a time complexity of $O(n^\omega)$, which is the matrix multiplication complexity.
As such, it cannot be a certificate, as there is no complexity difference between the computation and
the verification.

\begin{figure}[htbp]
    \centering
    \begin{tabular}{|l c l|}
        \hline
        Prover & & Verifier \\
        & $A, B\in \mathbb{F}^{n \times n}$
        & \\
        \hdashline\rule{0pt}{12pt}
        $C= A  B $ & $\xrightarrow{\text{C}}$ &
        $v \in \mathbb{F}^{n \times 1}$\\
        & & $ A  (B  v)- C  v \stackrel{?}{=} 0 $\\\hline
\end{tabular}
    \caption{Freivalds' certificate for matrix product}
    \label{cert:freivalds}
    \vspace{\spaceafterprotocol}
\end{figure}

Freivalds' certificate proposes a probabilistic method to check this product in a time complexity of
$\mu(A)+\mu(B)+\mu(C)$ using matrix/vector multiplication, as detailed in
Figure~\ref{cert:freivalds}.

\subsection{Column rank profile certificate}
\label{sec:noninterractive:CRP}

We now propose a certificate for the column rank profile.

\begin{figure}[htbp]
  \centering
  \begin{tabular}{|p{75pt} c p{90pt}|}
    \hline
    \multicolumn{1}{|c}{Prover} & & \multicolumn{1}{c|}{Verifier} \\
    & $A \in \mathbb{F}^{m \times n}$ & \\
    \hdashline
    \multirow{2}{80pt}{A $PLUQ$ decomposition of $A$ s.t. $UQ$ is in
      row echelon form} 
    &\multirow{2}{*}{$\xrightarrow{\text{P,L,U,Q}}$} 
    &\rule{0pt}{10pt}{$UQ$ row echelonized?}\\
    \rule{0pt}{30pt}&  & $A\stackrel{?}{=}PLUQ$, by cert.~\ref{cert:freivalds}\\
    \hdashline\rule{0pt}{10pt}
    & & Return $Q[1],\ldots, Q[r]$\\
    \hline
  \end{tabular}
  \caption{Column rank profile, non-interactive}
  \label{cert:crp:noninter}
\vspace{\spaceafterprotocol}\end{figure}

\begin{lemma}
    \label{lem:crp:ni}
    Let $A=PLUQ$ be the PLUQ decomposition of an $m \times n$ matrix $A$ of rank
    $r$. If  $UQ$ is in row echelon form then $(Q[1], \dots, Q[r])$ is the column rank profile of $A$.
\end{lemma}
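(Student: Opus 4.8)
The plan is to understand the PLUQ decomposition structurally and show that the column positions selected by $Q$ in the first $r$ columns are exactly the lexicographically minimal independent columns of $A$. First I would recall what the decomposition gives us: $P$ and $Q$ are permutations, $L$ is $m \times r$ unit lower-triangular (in the appropriate sense after the permutation), and $U$ is $r \times n$ upper-triangular with the rank-$r$ structure. The key assumption to exploit is that $UQ$ is in \emph{row echelon form}: this means the pivot (leading nonzero) of each successive row of $UQ$ occurs in a strictly increasing column position, and these pivot columns are precisely $Q[1], \dots, Q[r]$ by definition of $P[i]$ / the convention for $Q$.

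\begin{proof}
First observe that since $P$ and $Q$ are permutations and $L$ is invertible on its column space, the column space of $A = PLUQ$ is determined by which columns of $UQ$ are pivotal: a set of columns of $A$ is linearly independent if and only if the corresponding set of columns of $UQ$ is, because left-multiplication by the full-column-rank matrix $PL$ preserves linear independence and dependence relations among columns. Thus it suffices to prove that $(Q[1],\dots,Q[r])$ is the column rank profile of the matrix $M := UQ$.

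Now use that $M$ is in row echelon form with exactly $r$ nonzero rows. By definition of row echelon form, the $i$-th row has its leading nonzero entry in some column $j_i$, with $j_1 < j_2 < \dots < j_r$, and all entries below a pivot and to the left of it are zero. The columns indexed $j_1,\dots,j_r$ form an upper-triangular (after reordering) invertible $r\times r$ submatrix, hence are linearly independent; moreover any column of $M$ lying strictly before $j_i$ in column order has zero entries in rows $i,\dots,r$, so it lies in the span of the pivot columns occurring before it. This is exactly the statement that $(j_1,\dots,j_r)$ is the lexicographically minimal sequence of independent column indices of $M$, i.e. its column rank profile: greedily scanning columns left to right, a column is selected precisely when it is a pivot column, and every non-pivot column is dependent on the previously selected ones.

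It remains to identify the pivot columns $j_1 < \dots < j_r$ of $M = UQ$ with $Q[1],\dots,Q[r]$. This follows from the convention that $U$ is upper-triangular with its pivots in the first $r$ leading positions, so that the $i$-th pivot of $U$ sits in column $i$; applying $Q$ on the right sends column $i$ of $U$ to column $Q[i]$ of $UQ$, and the hypothesis that $UQ$ is in row echelon form guarantees $Q[1] < \dots < Q[r]$, so these are listed in increasing order as required. Combining the two reductions, $(Q[1],\dots,Q[r])$ is the column rank profile of $A$.
\end{proof}

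The step I expect to be the main obstacle is the very first reduction: pinning down precisely why multiplication on the left by $PL$ preserves the column rank profile (not merely the rank), which hinges on $PL$ having full column rank $r$ so that it induces an injection on column space and therefore preserves \emph{all} linear dependence relations among columns, and hence the entire greedy selection. Making this airtight may require spelling out the block shape of $L$ coming from the PLUQ normal form used by the authors.
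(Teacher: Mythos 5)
Your proof is correct and follows essentially the same route as the paper's: both factor out the full-column-rank left part of the decomposition so that the column dependency relations of $A$ coincide with those of the echelon-form matrix $UQ$ (the paper phrases this as left-equivalence to the reduced echelon form $R=\begin{bmatrix}U_1^{-1}UQ\\0\end{bmatrix}$ after completing $L$ to an invertible matrix), and then read the column rank profile off the pivot positions. The only difference is that you prove from scratch the standard fact that the pivot columns of a row echelon form are the lexicographically minimal independent columns, which the paper simply invokes.
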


\begin{proof}
  Write $A=P\begin{smatrix} L_1 \\ L_2 \end{smatrix}\begin{smatrix} U_1 & U_2 \end{smatrix}Q$, where $L_1$ and
  $U_1$ are $r\times r$ lower and upper triangular respectively.
  If $UQ$ is in echelon form, then  
  $ R  = \begin{smatrix} \begin{smallmatrix}I_r & U_{1}^{-1} U_2\end{smallmatrix} \\ {0_{(m-r)\times n}} \end{smatrix}$
  is in reduced echelon form.
  Now
  $$
  \begin{bmatrix} U_1^{-1}\\&I_{m-r} \end{bmatrix}
  \begin{bmatrix} L_1\\L_2&I_{m-r}  \end{bmatrix}^{-1} P^TA= 
  \begin{bmatrix} U_1^{-1}UQ\\0_{(m-r)\times n} \end{bmatrix}=
  R$$
  is left equivalent to $A$ and is therefore the echelon form of $A$.
  Hence the sequence of column positions of the pivots in $R$, that is
  $(Q[1],\dots,Q[r])$, is the 
  column rank profile of $A$.
\end{proof}

Lemma~\ref{lem:crp:ni} provides a criterion to verify a column rank
profile from a PLUQ decomposition.
Such decompositions can be computed in practice by several variants of Gaussian
elimination, with no arithmetic overhead, as shown in~\cite{jps13} or~\cite[\S~8]{dps15}.
Hence, we propose the certificate in Protocol~\ref{cert:crp:noninter}.

\begin{theorem}
    Let $A \in \mathbb{F}^{m \times n}$ with $r=\rank(A)$.
    Certificate~\ref{cert:crp:noninter},  verifying the column rank profile of $A$
    is sound, perfectly complete, with a communication bounded by $O(r(m+n))$, a Prover
    computation bounded by $O(mnr^{\omega -2})$ and
    a Verifier computation cost bounded by $O(r(m+n)) + \mu(A)$.
\end{theorem}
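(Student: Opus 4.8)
The plan is to verify each of the four claimed properties of Certificate~\ref{cert:crp:noninter} in turn, drawing on Lemma~\ref{lem:crp:ni} for correctness and on Freivalds' certificate (Figure~\ref{cert:freivalds}) for the soundness of the product check. First I would establish \emph{perfect completeness}: if the Prover is honest and sends a genuine $PLUQ$ decomposition of $A$ with $UQ$ in row echelon form, then both verification tests succeed with certainty. The echelon-form check on $UQ$ passes by construction, and Freivalds' check $A(B v) - Cv = 0$ (applied here to the factored product $PLUQ$) always accepts a true equality $A = PLUQ$. Hence an honest Prover is never rejected. By Lemma~\ref{lem:crp:ni}, the returned sequence $(Q[1],\dots,Q[r])$ is then exactly the column rank profile of $A$.

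Next I would argue \emph{soundness}. The Verifier's two checks jointly guarantee the hypotheses of Lemma~\ref{lem:crp:ni}. The echelon check is deterministic, so if $UQ$ is not in row echelon form the Verifier rejects. The only probabilistic ingredient is the equality $A \checks{=} PLUQ$: by the standard Freivalds analysis, if $A \neq PLUQ$ then a uniformly random $v$ over $\F$ detects the discrepancy with probability at least $1 - 1/|\F|$ (and one repeats to drive the error below any threshold). Thus, except with small probability, the Verifier accepts only when $A = PLUQ$ with $UQ$ echelonized, at which point Lemma~\ref{lem:crp:ni} certifies that the output is the correct column rank profile.

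For the \emph{complexity bounds} I would account separately for each party. The Prover computes a $PLUQ$ decomposition of $A$ with $UQ$ in row echelon form; as noted after Lemma~\ref{lem:crp:ni}, this is achievable by the Gaussian-elimination variants of~\cite{jps13} or~\cite[\S~8]{dps15} in $O(mnr^{\omega-2})$ field operations with no asymptotic overhead. The \emph{communication} consists of the factors $P, L, U, Q$: the permutations are linear, while $L$ has $O(rm)$ and $U$ has $O(rn)$ nonzero entries since they have only $r$ nontrivial rows/columns, giving $O(r(m+n))$ total. Finally the \emph{Verifier cost}: checking that $UQ$ is echelonized costs $O(r(m+n))$ by scanning the $r$ pivot rows; the Freivalds check requires forming $B v$ and applying the sparse/structured factors, which amounts to one application of $A$ to a vector, costing $\mu(A)$, plus $O(r(m+n))$ operations to multiply $v$ successively through $Q$, $U$, $L$, $P$. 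Summing yields the claimed $O(r(m+n)) + \mu(A)$.

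The main subtlety, rather than a deep obstacle, lies in bookkeeping the Freivalds step correctly in the \emph{factored} setting: one does not materialize the product $PLUQ$ (that would cost $O(mnr^{\omega-2})$ and defeat the purpose), but instead checks $A v \checks{=} P(L(U(Qv)))$ by applying the structured factors in sequence. I would verify carefully that each factor application stays within the rank-sensitive budget $O(r(m+n))$, using that $L$ and $U$ each have only $r$ nontrivial triangular blocks, and that the single genuine matrix-vector product with the dense input $A$ is the lone $\mu(A)$ term. Care is also needed when $A$ is rectangular ($m \neq n$), so that $L$ is $m\times r$-supported and $U$ is $r\times n$-supported rather than square as in the basic Freivalds description; the dimension bookkeeping must be adapted accordingly, but the argument is otherwise identical.
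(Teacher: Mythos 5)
Your proposal is correct and follows essentially the same route as the paper's proof: perfect completeness and the reduction to Lemma~\ref{lem:crp:ni}, soundness from the deterministic echelon check plus the standard Freivalds probability bound, and the same accounting of communication, Prover cost via~\cite{jps13}, and Verifier cost. Your extra care about evaluating the Freivalds check in factored form as $P(L(U(Qv)))$ is a correct and worthwhile elaboration of a point the paper leaves implicit, but it does not change the argument.
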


\begin{proof}
    If the Prover is honest, then, $UQ$ will be in row echelon form and $A=PLUQ$, thus, 
    by Lemma~\ref{lem:crp:ni}, the Verifier will be able to read the column rank
    profile of~$A$ from~$Q$.
    If the Prover is dishonest, either $A \neq PLUQ$, which will be caught by the Prover with probabilty
    $p \geq 1-\frac{1}{q}$ using Freivalds' certificate \cite{freivalds79} or $UQ$ is not in row echelon from, which
    will be caught every time by the Verifier.

    The Prover sends $P, L, U \text{ and } Q$ to the Verifier, hence the communication cost of $O(r(m+n))$, as
    $P$ and $Q$ are permutation matrices and $L, U$, are respectively $m \times r$ and $r \times n$ matrices,
    with $r = rank(A)$.
    Using algorithms provided in \cite{jps13}, one can compute the expected $PLUQ$ decomposition in
    $O(mnr^{\omega -2})$.
    The Verifier has to check if $A = PLUQ$, and if $UQ$ is in row echelon form, which can be done in $O(r(m+n))$.
\end{proof}

Note that this holds for the row rank profile of $A$: in that case, the Verifier has to check if $PL$ is in
column echelon form.

\subsection{Rank profile matrix certificate}
\label{sec:noninterractive:RPM}



\begin{lemma}\label{lem:echelonized}
  A decomposition  $A=PLUQ$
  reveals the rank profile matrix, namely
  $\RPM{A}=P \begin{smatrix}  I_r\\&0  \end{smatrix}Q$, if and only if
  $P \begin{smatrix} L&0  \end{smatrix}P^T$ is lower triangular and
  $Q^T \begin{smatrix} U\\0  \end{smatrix} Q$ is upper triangular.
\end{lemma}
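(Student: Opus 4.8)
The plan is to base everything on an algebraic completion of the trapezoidal factors $L$ and $U$ into square matrices. Since the padding columns of $\begin{smatrix} L&0\end{smatrix}$ and padding rows of $\begin{smatrix} U\\0\end{smatrix}$ are annihilated by $\begin{smatrix} I_r\\&0\end{smatrix}$, one checks directly that $\begin{smatrix} L&0\end{smatrix}\begin{smatrix} I_r\\&0\end{smatrix}\begin{smatrix} U\\0\end{smatrix}=LU$; inserting $P^TP=I_m$ and $QQ^T=I_n$ then yields the identity
\[
A=PLUQ=\bigl(P\begin{smatrix} L&0\end{smatrix}P^T\bigr)\bigl(P\begin{smatrix} I_r\\&0\end{smatrix}Q\bigr)\bigl(Q^T\begin{smatrix} U\\0\end{smatrix}Q\bigr),
\]
which holds for \emph{any} PLUQ decomposition. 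Writing $X=P\begin{smatrix} L&0\end{smatrix}P^T$, $\mathcal{R}=P\begin{smatrix} I_r\\&0\end{smatrix}Q$ and $Y=Q^T\begin{smatrix} U\\0\end{smatrix}Q$, we have $A=X\mathcal{R}Y$ with $\mathcal{R}$ an $r$-subpermutation matrix, and the whole proof reduces to reading the two triangularity hypotheses off of $X$ and $Y$.

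For the ``if'' direction I would assume $X$ lower triangular and $Y$ upper triangular and first note their diagonals: at the $r$ pivot positions the entries are the nonzero diagonal entries of the invertible blocks $L_1,U_1$, while the remaining $m-r$ (resp.\ $n-r$) diagonal entries vanish, since $\begin{smatrix} L&0\end{smatrix}$ has $m-r$ zero columns and $\begin{smatrix} U\\0\end{smatrix}$ has $n-r$ zero rows. I would patch each vanishing diagonal entry to $1$, producing an invertible lower triangular $\widehat X$ and an invertible upper triangular $\widehat Y$; because every patched entry is multiplied against a zero row (resp.\ column) of $\mathcal{R}$, the product is unchanged, so $A=\widehat X\mathcal{R}\widehat Y$. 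The finishing ingredient is the elementary fact that left multiplication by an invertible lower triangular matrix and right multiplication by an invertible upper triangular matrix preserve the rank of every leading submatrix: writing $M_{[1..i],[1..j]}$ for the leading $i\times j$ submatrix, one has $(\widehat X\mathcal{R}\widehat Y)_{[1..i],[1..j]}=\widehat X_{[1..i],[1..i]}\,\mathcal{R}_{[1..i],[1..j]}\,\widehat Y_{[1..j],[1..j]}$ with both flanking blocks invertible. Hence every leading submatrix of $A$ has the same rank as that of the $r$-subpermutation $\mathcal{R}$, and the uniqueness clause in the definition of the rank profile matrix forces $\mathcal{R}=\RPM{A}$.

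The ``only if'' direction is where I expect the genuine difficulty, since the clean rank-preservation argument cannot simply be run backwards. I would proceed by contraposition: if, say, $X$ is not lower triangular, an above-diagonal nonzero entry unwinds to a nonzero multiplier of $L$ linking a pivot row $P[t]$ to a strictly earlier row $P[s]<P[t]$, and the task is to exhibit a leading window $A_{[1..i],[1..j]}$ whose rank exceeds the number of ones of $\mathcal{R}$ it contains, contradicting $\mathcal{R}=\RPM{A}$. The delicate point is controlling how the non-pivot rows (the lower block of $L$) interact with the pivots in order to locate such a witnessing window. A cheaper route I would fall back on is a uniqueness argument: once the pivot supports $P,Q$ are fixed, the normalized factors $L,U$ of $A=PLUQ$ are unique, so our decomposition must coincide with the rank-profile-revealing factorization whose existence is guaranteed by~\cite{dps16}, and that canonical one satisfies both triangularity conditions by construction.
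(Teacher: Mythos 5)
Your ``if'' direction is correct and is essentially the paper's own argument: the paper likewise completes the trapezoidal factors into invertible triangular matrices, setting $\overline{L}=P\begin{smatrix}L& \begin{smallmatrix}0\\I_{m-r}\end{smallmatrix}\end{smatrix}P^T$ and $\overline{U}=Q^T\begin{smatrix}U\\ \begin{smallmatrix}0&I_{n-r}\end{smallmatrix}\end{smatrix}Q$ (which is exactly your diagonal patching, performed before conjugation rather than after), writes $A=\overline{L}\,P\begin{smatrix}I_r\\&0\end{smatrix}Q\,\overline{U}$, and concludes by preservation of the rank of every leading submatrix under invertible lower-triangular left and upper-triangular right multiplication.

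The genuine gap is in the ``only if'' direction. Your contraposition sketch is, by your own admission, unfinished, and the uniqueness fallback does not close it. Uniqueness of the normalized factors $L,U$ holds \emph{for fixed} $P$ and $Q$; but the hypothesis $\RPM{A}=P\begin{smatrix}I_r\\&0\end{smatrix}Q$ only constrains $P$ and $Q$ through the $r$ pivot pairs $(P[i],Q^{-1}[i])$, $i\leq r$, whereas the canonical rank-profile-revealing factorization of~\cite{dps16} comes with its own permutations $P_0,Q_0$, which may differ from yours on the remaining $m-r$ columns of $P$ and $n-r$ rows of $Q$ (and even on the ordering of the pivots). Since $(P\begin{smatrix}L&0\end{smatrix}P^T)_{P[k],P[i]}=L_{k,i}$, both the factor $L$ (through $L_2$) and its conjugate genuinely depend on the tail of $P$, so you cannot conclude that your decomposition ``must coincide'' with the canonical one, nor transfer its triangularity to yours. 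The paper does not reprove this direction either: it invokes~\cite[Th.~21]{dps16}, which asserts precisely the implication needed; citing that theorem (rather than only the existence of some revealing factorization) is the intended resolution, and absent that citation your argument for ``only if'' does not go through.
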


\begin{proof}

  The \textit{only if} case is proven in~\cite[Th.~21]{dps16}.
Now suppose that 
$P\begin{smatrix}L&0_{m\times(m-r)} \end{smatrix}P^T$ is lower triangular.
Then we must also have that
$\overline{L}=P\begin{smatrix}L&
\begin{smallmatrix}  0\\I_{m-r}\end{smallmatrix}
 \end{smatrix}P^T$
is lower triangular and non-singular.  Similarly suppose that
$Q^T \begin{smatrix} U\\0  \end{smatrix} Q$ is upper triangular so that
$\overline{U}=Q^T\begin{smatrix}U\\
\begin{smallmatrix}  0&I_{n-r}\end{smallmatrix}\end{smatrix}
Q$
is non-singular upper triangular.
We have $A=\overline{L}P \begin{smatrix}  I_r\\&0\end{smatrix}Q\overline{U}$.
Hence the rank of any $(i,j)$ leading submatrix of $A$ is that of the $(i,j)$
leading submatrix of $P\begin{smatrix}  I_r\\&0\end{smatrix}Q$, thus proving
that $\RPM{A}=P\begin{smatrix}  I_r\\&0\end{smatrix}Q$.
\end{proof}

We use this characterization to verify the computation of the rank
profile matrix in the following protocol:
Once the Verifier receives $P, L, U \text{ and } Q$, he has to:
\begin{compactenum}
\item Check that $A = PLUQ$, using Freivalds' certificate \cite{freivalds79} 
\item Check that $L$ is echelonized by $P$ and $U^T$ by~$Q^T$.
\item If successful, compute the rank profile matrix of $A$ as
  $\mathcal{R}_A = P \begin{bsmallmatrix} I_{r} & \\ & 0_{(m-r) \times (n-r)} \end{bsmallmatrix} Q$
\end{compactenum}

\begin{figure}[htbp]
    \centering
    \begin{tabular}{|c c p{4cm}|}
      \hline
       Prover & & \hspace{15pt}Verifier \\
        \multicolumn{3}{|c|}{\hspace{-45pt}$A \in \mathbb{F}^{m \times n}$} \\
        \hdashline\rule{0pt}{12pt}
         \multirow{3}{2.4cm}{a PLUQ decomp. of $A$ revealing  $\RPM{A}$.} &
         $\xrightarrow{\text{P,L,U,Q}}$ &
         1.  $A\stackrel{?}{=}PLUQ$ by
         Protoc.~\ref{subsec:freivalds}\\
         & & 2. Is $PLP^T$ lower triangular?\\
         & & 3. Is $Q^TUQ$ upper triangular?\\
        \hline
    \end{tabular}
    \caption{Rank profile matrix, non-interactive}
    \label{cert:RPM:noninter}
\vspace{\spaceafterprotocol}\end{figure}

\begin{theorem}
    Certificate~\ref{cert:RPM:noninter} verifies the rank profile matrix of $A$,
    it is sound and perfectly complete, with a communication cost bounded by
    $O(r(n+m))$, a Prover computation cost bounded by $O(mnr^{\omega-2})$ and a
    Verifier computation cost bounded by $O(r(m+n)) + \mu(A)$. 
\end{theorem}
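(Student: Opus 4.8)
The plan is to prove the four asserted properties (soundness, perfect completeness, and the three complexity bounds) by mirroring the structure of the column rank profile theorem, but invoking Lemma~\ref{lem:echelonized} in place of Lemma~\ref{lem:crp:ni}. The conceptual work is already packaged in the lemma, so the theorem reduces to verifying that each check in Protocol~\ref{cert:RPM:noninter} is faithfully testing a hypothesis of the lemma, and that each step fits within the claimed budget.

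For \textbf{perfect completeness}, I would argue that an honest Prover sends a genuine PLUQ decomposition that reveals $\RPM{A}$. By the \emph{only if} direction of Lemma~\ref{lem:echelonized}, such a decomposition necessarily has $P\begin{smatrix} L&0 \end{smatrix}P^T$ lower triangular and $Q^T\begin{smatrix} U\\0 \end{smatrix}Q$ upper triangular, so checks 2 and 3 succeed; check 1 succeeds because $A=PLUQ$ holds exactly and Freivalds' test never rejects a true product. Hence no true statement is ever rejected. For \textbf{soundness}, I would argue the contrapositive: if the returned matrix is not $\RPM{A}$, then the decomposition does not satisfy the equivalent conditions of Lemma~\ref{lem:echelonized}, so either $A\neq PLUQ$ or one of the triangularity conditions fails. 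A failure of triangularity is caught deterministically by checks 2 or 3, while $A\neq PLUQ$ is caught by Freivalds' certificate with probability at least $1-\frac{1}{q}$; thus a false statement is accepted with probability at most $\frac{1}{q}<1$, and repetition drives this arbitrarily low.

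For the \textbf{complexity bounds} I would treat the three quantities in turn, reusing the accounting from the column-rank-profile theorem verbatim where possible. The communication consists of $P,L,U,Q$: the permutations are linear to transmit and $L,U$ are $m\times r$ and $r\times n$, giving $O(r(m+n))$. The Prover cost is that of computing a PLUQ decomposition revealing the rank profile matrix, which by the algorithms of~\cite{jps13,dps16} is $O(mnr^{\omega-2})$ with no overhead. For the Verifier, Freivalds' test costs one application of $A$ to a vector plus the products by $P,L,U,Q$, totalling $\mu(A)+O(r(m+n))$; the two triangularity checks each require only reading off the support of $PLP^T$ and $Q^TUQ$, which can be done in $O(r(m+n))$ by permuting indices rather than forming dense products. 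Summing gives $O(r(m+n))+\mu(A)$.

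The step I expect to be the only genuine subtlety is the Verifier cost of checks 2 and 3: forming $PLP^T$ and $Q^TUQ$ naively as matrix products would be too expensive, so the argument must observe that testing lower- or upper-triangularity of a conjugated sparse factor amounts to checking, for each of the $O(r(m+n))$ stored nonzero entries of $L$ and $U$, that its permuted row and column indices lie in the correct triangle—an $O(1)$ comparison per entry after reading $P,Q$ as index maps. Everything else is a direct transcription of the column-rank-profile argument with Lemma~\ref{lem:crp:ni} replaced by Lemma~\ref{lem:echelonized}, so I would keep the exposition brief and lean on that parallel.
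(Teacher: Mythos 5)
Your proposal is correct and follows essentially the same route as the paper: both directions of Lemma~\ref{lem:echelonized} for completeness and soundness, Freivalds for the product, and the same communication/Prover/Verifier accounting. Your remark on implementing checks 2 and 3 by testing the permuted indices of the stored nonzero entries of $L$ and $U$ (rather than forming $PLP^T$ and $Q^TUQ$ as dense products) is a detail the paper leaves implicit, but it does not change the argument.
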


\begin{proof}
    If the Prover is honest, then, the provided $PLUQ$ decomposition is indeed
    a factorization of $A$, which means Freivalds' certificate will pass.
    It also means this $PLUQ$ decomposition reveals the rank profile matrix.
    According to Lemma~\ref{lem:echelonized}, $PLP^T$ will be lower triangular
    and $Q^TUQ$ upper triangular. Hence the verification will succeeds and 
    $\mathcal{R}_A = P \begin{bsmallmatrix} I_{r} & \\ & 0_{(m-r) \times (n-r)} \end{bsmallmatrix} Q$
    is indeed the rank profile matrix of $A$.
    If the Prover is dishonest, either $A \neq PLUQ$, which will be caught with probabilty
    $p \geq 1-\frac{1}{q}$ by Freivalds' certificate or the $PLUQ$ decomposition does not reveal the 
    rank profile matrix of $A$.
    In that case, Lemma~\ref{lem:echelonized} implies that either
    $P \begin{smatrix}      L&0    \end{smatrix}P^T$ is not lower triangular or
    $P \begin{smatrix}      U\\ 0 \end{smatrix}Q$ is not upper triangular which
    the will be detected.

    The Prover sends $P, L, U \text{ and } Q$ to the Verifier, hence the
    communication cost of $\GO{(n+m)r}$.
    A rank profile matrix revealing $PLUQ$ decomposition can be computed in $O(mnr^{\omega-2})$ operations~\cite{DPS:2013}.
    The Verifier has to check if $A = PLUQ$, which can be achieved in
    $O((m+n)r)+\mu(A)$ field operations.
\end{proof}

\section{Linear communication certificate toolbox}

\subsection{Triangular one sided equivalence}

Two matrices $A, B \in \mathbb{F}^{m \times n}$ are right (resp. left)
equivalent if there exist an invertible $n\times n$ matrix $T$ such that $AT=B$
(resp. $TA=B$). If in addition $T$ is a lower triangular matrix, we say that $A$ and
$B$ are lower triangular right (resp. left)  equivalent. The upper triangular
right (resp. left ) equivalence is defined similarly.
We propose a certification protocol that two matrices are left or right
triangular equivalent.
Here, $A$ and $B$ are input, known by the Verifier and the Prover.
A simple certificate would be the matrix $T$ itself, in which case the Verifier would
check the product $AT = B$ using Freivalds' certificate.
This certificate is non-interactive and requires a quadratic amount of communication.
In what follows, we present a certificate which allows to verify the one sided
triangular equivalence without communicating $T$, requiring only $2n$
communications.
It is essentially a Freivalds' certificate with a more constrained interaction
pattern in the way the challenge vector and the response vector are
communicated.
This pattern imposes a triangular structure in the way the Provers' responses
depend on the Verifier challenges which match with the structure of the problem.

\begin{figure}[htbp]
    \centering
    \begin{tabular}{|p{2.7cm} c l|}
        \hline
        Prover &  & Verifier \\
        & $ A, B \in \mathbb{F}^{m{\times}n} $ &  \\
        & $A$ regular, $m{\geq}n$ &  \\
        \hdashline\rule{0pt}{12pt}
        $T$ lower triangular matrix s.t. $AT = B$ & $\xrightarrow{\text{1: T exists}}$ & \\
        \hdashline\rule{0pt}{12pt}
        \multirow{2}{*}{$y_1 = T_{1,*} \begin{smatrix}x_1\\0\\ \vdots \end{smatrix} $}& $\xleftarrow{\text{$2: x_1$}}$ & $x_i\sample{S}\subset{\F}$ \\
         & $\xrightarrow{\text{$3: y_1$}}$ & \\
        \multicolumn{1}{|c}{$\vdots$}&$\vdots$& \\
        \multirow{2}{*}{$y_n = T_{n,*} \begin{smatrix}x_1\\ \vdots\\x_n \end{smatrix} $}& $\xleftarrow{\text{$2n: x_n$}}$ &  \\
        & $\xrightarrow{\text{$2n+1: y_n$}}$ &
        $y =\begin{bmatrix}y_1&..&y_n\end{bmatrix}^T$ \\
        & &
        $Ay\stackrel{?}{=}Bx$\\
        \hline
    \end{tabular}
    \caption{Lower triang. right equivalence of regular matrices}
    \label{cert:triangular_equivalence}
\vspace{\spaceafterprotocol}\end{figure}

\begin{theorem}
    \label{th:triangular_equivalence}
    Let $A, B \in \mathbb{F}^{m \times n}$, and assume $A$ is regular.
    Certificate~\ref{cert:triangular_equivalence}
    proves that there exists a lower triangular matrix $T$ such that $AT = B$.
    This certificate is sound, with probabilty larger than $1-\frac{1}{|S|}$,
    perfectly complete, occupies $2n$ 
    communication space, and can be computed in 
    $O(mn^{\omega-1})$ field operations and  
    verified in $\mu(A) + \mu(B)$ field operations.
\end{theorem}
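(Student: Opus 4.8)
The plan is to establish the three claimed properties separately, treating perfect completeness and the resource bounds as essentially mechanical and concentrating the effort on soundness, where the constrained interaction pattern is doing the real work. For \textbf{completeness}, I would note that an honest Prover holds a lower triangular $T$ with $AT=B$; because $T$ is lower triangular its $i$-th row is supported on columns $1,\dots,i$, so the response $y_i = T_{i,*}(x_1,\dots,x_i,0,\dots,0)^T = (Tx)_i$ is computable from the challenges $x_1,\dots,x_i$ already received, as the protocol requires. Concatenating the responses gives $y=Tx$, whence $Ay = ATx = Bx$ and the final test passes identically, so completeness is perfect.

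For \textbf{soundness} I would argue the contrapositive: assuming no lower triangular $T$ satisfies $AT=B$, I bound the acceptance probability over $x\sample{S^n}$, fixing the Prover's internal randomness so that each response is a deterministic function $y_i = g_i(x_1,\dots,x_i)$ of the challenges seen so far. Since $A$ is regular with $m\ge n$ it is injective and admits a left inverse $A^{+}$ with $A^{+}A = I_n$; set $M = A^{+}B$. Left-multiplying the test equation $Ay=Bx$ by $A^{+}$ shows that acceptance forces $y = Mx$ together with $(B-AM)x = 0$. Since $T:=M$ would certify the statement whenever $M$ is lower triangular and $AM=B$, the assumption splits into exactly two cases: either $AM\neq B$, or $AM=B$ while $M$ is not lower triangular.

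The case $AM\neq B$ is the ordinary Freivalds situation: $(B-AM)x=0$ is a nonzero system of linear forms, satisfied by a random $x$ with probability at most $1/|S|$. The decisive case, and the main obstacle, is $AM=B$ with $M$ not lower triangular, because then $Bx=AMx\in\operatorname{im}(A)$ automatically and injectivity of $A$ reduces acceptance to the exact identity $y=Mx$, i.e. $y_i=\sum_j M_{ij}x_j$ for every $i$. Here I would exploit the interaction order: pick $(i,k)$ with $k>i$ and $M_{ik}\neq 0$, which exists since $M$ is not lower triangular. The response $y_i=g_i(x_1,\dots,x_i)$ is independent of $x_k$, whereas $(Mx)_i$ is a nonconstant affine function of $x_k$; hence for every fixing of the remaining coordinates at most one value of $x_k\in S$ can yield $y_i=(Mx)_i$. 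Consequently the $i$-th coordinate alone matches with probability at most $1/|S|$, bounding the overall acceptance probability by $1/|S|$ and giving soundness error $\le 1/|S|$.

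Finally, for the \textbf{resource bounds}, the transcript consists of the $n$ challenges and $n$ responses, i.e. $2n$ field elements. An honest Prover produces $T$ by solving the (triangular) system $AT=B$ through a factorization of the regular matrix $A$ in $O(mn^{\omega-1})$ operations, the subsequent dot products $y_i=\sum_{j\le i}T_{ij}x_j$ costing only $O(n^2)$; and the Verifier's sole nontrivial work is forming $Ay$ and $Bx$ for the final check, in $\mu(A)+\mu(B)$ operations. I expect the only delicate point of the whole argument to be the coordinatewise probabilistic step in the non-triangular case, where it is precisely the incremental disclosure of the challenges that prevents a cheating Prover from anticipating $x_k$.
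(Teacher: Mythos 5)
Your proof is correct and follows essentially the same route as the paper's: both exploit the fact that regularity of $A$ forces the unique consistent response to be $\widehat{y}=A^{\dagger}Bx$, whose $i$-th entry depends on some challenge $x_j$ with $j>i$ exactly when $A^{\dagger}B$ is not lower triangular, while the interaction order makes the Prover's $y_i$ independent of that $x_j$, so each such coordinate matches with probability at most $1/|S|$. Your explicit case split on $AM\neq B$ even patches a small point the paper glosses over (the possibility that $AT=B$ has no solution at all, where the paper simply asserts uniqueness of $T$), but this is a refinement of the same argument rather than a different one.
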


\begin{proof}
  If the Prover is honest, then $AT=B$ and she just computes $y=Tx$, so
  that $Ay=ATx=Bx$.
  If the Prover is dishonest, 
  replace the random values $x_1,\dots,x_n$ by algebraically independent variables
  $X_1,\dots,X_n$.
  Since $A$ is regular, there is a unique $n\times n$ matrix
  $T$ (that is, $T=A^{\dagger}B$ with $A^{\dagger}$ the Moore-Penrose inverse
  of $A$) such that $AT=B$. 
  For the same reason, there is a unique vector $\widehat{Y}=A^{\dagger}BX$ such
  that $A\widehat{Y}=BX$. 
  The vector $\widehat{Y}$ is then formed by $n$ degree-$1$
   polynomials in   $X_1,\dots,X_n$.
  If $T$ is not lower triangular, let $i$ be the first row such that
  $T_{i,j}\neq 0$ for some $j>i$, and let $j_m$ be the largest such $j$.
  Then $\widehat{Y}_i$ has degree 1 in $X_{j_m}$.
  Let $Y$ be the vector output by the Prover.
  At step $2i+1$, the value for $X_{j_m}$ was still not released, hence $Y_i$ is
  constant in $X_{j_m}$.
  As $A$ is regular, the verification $AY=BX=A\widehat{Y}$ is equivalent to
  $Y-\widehat{Y}=0$. The $i$-th component in this equation is $Y_i-\widehat{Y}_i=0$, whose
  left hand-side contains a non zero monomial in $X_j$. There is therefore a
  probability lower than $1/|S|$ that the random choice for $x_j$ makes this
  polynomial vanish.

  This certificate requires to transmit $x$ and $y$, which costs $2n$ in communication.
    The Verifier has to compute $Ay$ and $Bx$, whose computational cost is $\mu(A) + \mu(B)$.
    The Prover has to compute $T$, this can be done by a PLUQ
    elimination on $A$ followed by a triangular system solve, both in
    $O(mn^{\omega-1})$. Then  $y=Tx$ requires only $O(n^2)$ operations.
\end{proof}

    Note that the case where $T$ is upper triangular works similarly: the Verifier needs to transmit
    $x$ in reverse order, starting by $x_n$.



\subsection{Generic rank profile-ness}

The problem here is to verify whether a non-singular input matrix
$A\in\F^{m\times n}$ has generic rank profile (to test non-singularity, one can
apply beforehand the linear communication certificate in~\cite[Fig.~2]{dk14},
see also Protocol~\ref{cert:lower_rank} thereafter).
A matrix $A$ has generic rank profile if and only if it has an LU decomposition
$A=LU$, with $L$ unit lower triangular and $U$ non-singular upper
triangular.  The protocol picks random vectors $\phi,\psi,\lambda$ and asks the Prover to
provide the vectors $z^T=\lambda^TL$, $x=U\phi$, $y=U\psi$ on the fly, while receiving the
coefficients of the vectors $\phi,\psi,\lambda$ one at a time.
These vectors satisfy the fundamental equations  
$z^Tx=\lambda^TA\phi$ and $z^Ty=\lambda^TA\psi$ that will be checked by the Verifier.





\begin{figure}[htbp]
  \centering
  \begin{tabular}{|l c l|}
    \hline\rule{0pt}{12pt}   
    Prover & & Verifier \\
     & $A \in \mathbb{F}^{n{\times}n}$ & \\
     & non-singular & \\
    \hdashline\rule{0pt}{12pt}    
     $A=LU $ &$\xrightarrow{\text{A has g.r.p.}}$ &\\
    \hdashline\rule{0pt}{12pt}   
    &\multicolumn{2}{l|}{\textbf{for} $i$ from $n$ downto $1$} \\
    $\begin{bmatrix} x&y \end{bmatrix} = U \begin{bmatrix}\phi&\psi \end{bmatrix}$ & $\xleftarrow{\phi_i,\psi_i}$ & \hspace{-2pt}$(\phi_i,\psi_i) \sample{S^2}\subset{\F^2}$\\
    & $\xrightarrow{x_i,y_i} $ &\\
    $z^T = \lambda^T L$& $\xleftarrow{\lambda_i}$ & \hspace{-2pt}$\lambda_i\sample{S}\subset{\F}$\\
    & $\xrightarrow{z_i} $ &\\
    \hdashline\rule{0pt}{12pt}    
    && \hspace{-2pt}$z^T  \begin{bmatrix}x&y\end{bmatrix}  \checks{=} (\lambda^T A)\begin{bmatrix}\phi&\psi\end{bmatrix}$ \\
    \hline
  \end{tabular}
  \caption{Generic rank profile with linear communication}
  \label{cert:GRP}
      \vspace{\spaceafterprotocol}
\end{figure}

\begin{theorem}
\label{th:GRP}
   Certificate~\ref{cert:GRP} verifying that a non-singular matrix has generic
   rank profile is sound, with probability larger than $1-\frac{1}{|S|}$,
   perfectly complete, communicates $3n$ field elements, and can be computed in
   $O(n^\omega)$ field operations for the Prover and $\mu(A) + 8n$ field
   operations for the Verifier. 
\end{theorem}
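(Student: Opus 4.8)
The plan is to prove the three claimed properties separately, reusing the interaction-order idea already exploited in Theorem~\ref{th:triangular_equivalence}. First I would record the algebraic skeleton: since $A$ has generic rank profile exactly when it factors as $A=LU$ with $L$ unit lower triangular and $U$ nonsingular upper triangular, the honest Prover computes $x=Uu$, $y=Uv$ and $z^T=w^TL$. For \emph{perfect completeness} I would check that these responses can indeed be produced on the fly under the revealing schedule of Certificate~\ref{cert:GRP}: when $x_i,y_i$ must be sent, only $u_i,\dots,u_n$ and $v_i,\dots,v_n$ are known, which suffices to evaluate $(Uu)_i=\sum_{j\ge i}U_{ij}u_j$ and $(Uv)_i$ because $U$ is upper triangular; likewise $z_i=(L^Tw)_i=\sum_{j\ge i}L_{ji}w_j$ needs only $w_i,\dots,w_n$ because $L$ is lower triangular. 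Then $z^Tx=w^TLUu=w^TAu$ and $z^Ty=w^TAv$ hold identically, so both tests always pass.

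For \emph{soundness} I would replace the challenge coordinates $u_i,v_i,w_i$ by independent indeterminates and first record the dependency constraints forced by the communication pattern: $x_i,y_i$ are functions of $u_{\ge i},v_{\ge i},w_{>i}$ only (in particular constant in $u_{<i},v_{<i}$ and in $w_{\le i}$), while $z_i$ is a function of $u_{\ge i},v_{\ge i},w_{\ge i}$ (constant in $w_{<i}$). The core is to show that if the two acceptance equations $z^Tx=w^TAu$ and $z^Ty=w^TAv$ hold identically, then $A$ splits as a lower triangular times an upper triangular matrix. To expose the mechanism I would treat the linear case first: the triangular constraints kill the coefficients with $j<i$, so $x_i=\sum_{j\ge i}X_{ij}u_j$ and $z_i=\sum_{j\ge i}Z_{ij}w_j$; expanding $z^Tx$ and matching the coefficient of $w_au_b$ against $A_{ab}$ yields $A=Z^TX$, a product of the lower triangular $Z^T$ by the upper triangular $X$. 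Since $A$ is nonsingular, $\det A=\prod_iZ_{ii}\prod_iX_{ii}\neq 0$, so every diagonal entry is nonzero; rescaling by $\operatorname{diag}(Z^T)$ turns this into a genuine $A=LU$ with $L$ unit lower triangular, hence $A$ has generic rank profile. The contrapositive is exactly the soundness statement.

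The delicate point, which I expect to be the main obstacle, is that a cheating Prover's responses need not be linear and that $z_i$ is a priori allowed to depend on $u$ and $v$ as well as on $w$, so the extraction $A=Z^TX$ must be justified for arbitrary strategies. As in Theorem~\ref{th:triangular_equivalence}, I would resolve this with the freshest-variable argument: if no such factorization exists, one of the acceptance polynomials, viewed in a single not-yet-revealed challenge coordinate that the matching response is provably constant in (by the dependency constraints above), is affine with a nonzero leading coefficient coming from the $A$-side; it therefore cannot vanish identically, and a Schwartz--Zippel evaluation over $S$ hides the discrepancy with probability at most $1/|S|$, giving soundness with probability larger than $1-\frac{1}{|S|}$. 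This is also where I expect the design choice of two probe vectors $u,v$ sharing one $w$ to matter: the pair of equations with a common $z$ is what forces $z$ to behave as $L^Tw$ and rules out the cross-dependencies, whereas a single equation would not pin the factorization down.

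Finally, for the resources: the Prover computes one $LU$ (equivalently a $PLUQ$ with trivial permutations) decomposition in $O(n^\omega)$ and then each response by a triangular matrix--vector product in $O(n^2)$, so $O(n^\omega)$ overall. The Verifier forms the row vector $w^TA$ once at cost $\mu(A)$ and then evaluates the four length-$n$ inner products $z^Tx$, $z^Ty$, $(w^TA)u$, $(w^TA)v$, i.e.\ $8n$ field operations, for a total of $\mu(A)+8n$. The data returned by the Prover are the $3n$ coefficients $x_i,y_i,z_i$, the challenge coordinates being Verifier-generated (and replaceable by hashes under Fiat--Shamir), which gives the announced $3n$ field elements of communication.
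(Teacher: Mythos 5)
Your completeness argument and your resource accounting agree with the paper's, and you have correctly located the crux. But the soundness argument has a genuine gap, and it sits exactly where you flagged ``the main obstacle.'' Your extraction $A=Z^TX$ is only carried out for a Prover whose responses are \emph{linear} forms supported on the triangular index sets $j\ge i$; the communication pattern, however, only forbids $x_i$ from depending on $u_{<i},v_{<i},w_{\le i}$ --- it still allows $x_i$ to depend arbitrarily (and non-linearly) on $v_{\ge i}$ and $w_{>i}$, and $z_i$ to depend on all of $u_{\ge i},v_{\ge i},w_{\ge i}$. Since the Prover supplies \emph{both} factors of the bilinear form $z^Tx$, an error in $x$ can be compensated by an error in $z$, and ``matching the coefficient of $w_au_b$'' is no longer available. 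Your proposed repair --- exhibit one acceptance polynomial that is affine in a fresh challenge coordinate with a nonzero leading coefficient ``coming from the $A$-side'' --- is asserted rather than proved: you do not identify the variable, the polynomial, or the reason the leading coefficient is nonzero, and that last point is precisely the hard part (it ultimately rests on nonsingularity of $A$ forcing some minor $\minor{A}{1..i}{1..i-1,j}$ with $j>i$ to be nonzero whenever $d_i=\minor{A}{1..i}{1..i}=0$).

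The paper closes this gap with a round-by-round induction that your sketch does not reach. It maintains a hypothesis $H_i$ expressing the residual sums $Z_{i..n}^TX_{i..n}$ and $Z_{i..n}^TY_{i..n}$ as explicit combinations of minors of $A$ divided by $d_{i-1}$, together with $d_j\neq 0$ for all $j<i$. At round $i$, the fact that $Z_i$ is committed before $\Lambda_i$ is revealed forces a $2\times 2$ linear system in $(Z_i,\Lambda_i)$ to be singular (otherwise the Verifier hits the unique compatible value with probability at most $1/|S|$); the two resulting vanishing determinants, a non-collinearity argument using \emph{both} probe vectors $\Phi$ and $\Psi$ to conclude $d_i\neq 0$, and the Desnanot--Jacobi condensation identity (Lemma~\ref{lem:dodgson}) then advance the induction to $H_{i+1}$. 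This is the mechanism that converts acceptance into ``all leading principal minors are nonzero,'' i.e.\ generic rank profile. Your intuition that the two probes sharing a single $w$ (hence a single $z$) is what pins the Prover down is correct, but without the induction on minors and the Dodgson identity the argument does not go through; as written, the proposal proves soundness only against Provers who volunteer to answer with triangular linear forms.
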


We will need the following Lemma, used in Dodgson determinant condensation rule.
\begin{lemma}[Desnanot-Jacobi, or Dodgson rule~\cite{Dod1866}]\label{lem:dodgson}
  \[
\minor{A}{1..n}{1..n} \minor{A}{2..n-1}{2..n-1} =  
\left|
\begin{array}{cc}
  \minor{A}{1..n-1}{1..n-1} &    \minor{A}{2..n}{1..n-1} \\
\minor{A}{1..n-1}{2..n} &  \minor{A}{2..n}{2..n} \\  
\end{array}
  \right|.
\]
\end{lemma}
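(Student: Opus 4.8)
The plan is to read the identity as a polynomial identity in the $n^2$ entries of $A$ and to establish it on the Zariski-dense open set where the central block is invertible; the general case then follows since both sides are polynomials in the entries of $A$. So assume throughout that the $(n-2)\times(n-2)$ submatrix $M$ with $\det M=\minor{A}{2..n-1}{2..n-1}$ is nonsingular. Partition $\{1,\dots,n\}$ into the \emph{interior} $\beta=\{2,\dots,n-1\}$ and the \emph{border} $\alpha=\{1,n\}$, and let $S=A_{\alpha\alpha}-A_{\alpha\beta}M^{-1}A_{\beta\alpha}$ be the $2\times 2$ Schur complement of $M$, whose entries are $s_{ij}=a_{ij}-A_{i\beta}M^{-1}A_{\beta j}$ for $i,j\in\{1,n\}$.

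First I would apply the Schur determinant formula $\det\begin{smatrix} P & Q\\ R & T\end{smatrix}=\det(P)\det(T-RP^{-1}Q)$ with $P=M$ to each of the six minors appearing in the statement. Reordering rows and columns so that the interior indices $\beta$ come first, the full matrix gives $\det A=\det(M)\det(S)$, where the two reordering signs $(-1)^{n-2}$ cancel. The two principal corner minors carry no sign and factor as $\minor{A}{1..n-1}{1..n-1}=\det(M)\,s_{11}$ and $\minor{A}{2..n}{2..n}=\det(M)\,s_{nn}$. For the two anti-diagonal corner minors one must move a single border row or column past the $n-2$ interior indices, producing a sign $(-1)^{n-2}$, so that $\minor{A}{2..n}{1..n-1}=(-1)^{n-2}\det(M)\,s_{n1}$ and $\minor{A}{1..n-1}{2..n}=(-1)^{n-2}\det(M)\,s_{1n}$. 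I expect the sign bookkeeping in these last two minors to be the only delicate point; the key observation is that both carry the \emph{same} sign $(-1)^{n-2}$, which therefore squares away in their product.

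Substituting, the right-hand side becomes $\det(M)^2\bigl(s_{11}s_{nn}-s_{1n}s_{n1}\bigr)=\det(M)^2\det(S)$, while the left-hand side is $\det(A)\,\det(M)=\det(M)\det(S)\cdot\det(M)=\det(M)^2\det(S)$, so the two sides agree on the dense set and hence everywhere. A more conceptual alternative I could take instead is to recognize the right-hand side as the $2\times 2$ minor of $\mathrm{adj}(A)$ indexed by the corner rows and columns $\{1,n\}$ and to invoke Jacobi's theorem relating a minor of the adjugate to the complementary minor of $A$ times a power of $\det A$, which reduces the claim to the size-$2$ case of that classical theorem; the Schur-complement computation above is preferable here because it keeps the argument self-contained.
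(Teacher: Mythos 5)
The paper offers no proof of this lemma: it is stated as the classical Desnanot--Jacobi/Dodgson condensation identity with a citation to Dodgson (1866), so there is no in-paper argument to compare against. Your Schur-complement proof is correct and self-contained, and the sign bookkeeping is right: $\det A$ and the two principal corner minors acquire no net sign because the reordering permutation is applied simultaneously to rows and columns, while each anti-diagonal corner minor requires moving exactly one border index past the $n-2$ interior ones, hence both acquire the \emph{same} factor $(-1)^{n-2}$, which disappears in their product; the identity then reduces to $\det(M)^2\det(S)=\det(M)^2(s_{11}s_{nn}-s_{1n}s_{n1})$, which is the definition of $\det S$. The one step I would tighten is the genericity argument. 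The paper works over an arbitrary field $\F$, possibly finite, and over a finite field the assertion that two polynomial functions agreeing on the nonempty set $\{\det M\neq 0\}$ agree everywhere is false (polynomial functions on $\F_q^{n^2}$ do not determine their coefficients). The standard repair is already implicit in your opening sentence: treat the entries as indeterminates, run the Schur-complement computation in the fraction field $\mathbb{Q}(x_{11},\dots,x_{nn})$, where $\det M$ is a nonzero element of an integral domain and hence invertible; since both sides of the identity lie in $\mathbb{Z}[x_{ij}]$ and agree in the fraction field, they agree as polynomials and therefore specialize to every matrix over every commutative ring. Phrased that way (rather than via Zariski density of matrices over $\F$ itself), the proof is complete; the alternative reduction to Jacobi's theorem on minors of the adjugate that you mention is equally valid but, as you say, less self-contained.
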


Applying the same permutation, the cyclic shift of order 1 to the left, on the rows and columns of $A$, yields the
following formula with no change of sign:
  \begin{equation}\label{eq:dodgson:variant}
\minor{A}{1..n}{1..n} \minor{A}{1..n-2}{1..n-2} = 
\left|\begin{array}{cc}
\minor{A}{1..n-2, n}{1..n-2, n} &\minor{A}{1..n-1}{1..n-2, n} \\
  \minor{A}{1..n-2, n}{1..n-1}&\minor{A}{1..n-1}{1..n-1}
\end{array}
\right|.
  \end{equation}
    
\begin{proof}[of Theorem~\ref{th:GRP}]
The protocol is perfectly complete: if $A=LU$, then
$z^T\begin{bmatrix}x&y\end{bmatrix}=
\lambda^TLU\begin{bmatrix}\phi&\psi\end{bmatrix}=\lambda^TA\begin{bmatrix}\phi&\psi\end{bmatrix}$. 

Now, for the soundness, replace every $\phi,\psi,\lambda$ chosen at random by the Verifier by
vectors of algebraically independent variables $\Phi,\Psi,\Lambda$.
Similarly, the responses of the Prover $z,x,y$ are now 
vectors of algebraically independent variables $Z,X,Y$.
Under the assumption of  the success of the Verifier test,
\begin{equation}\label{eq:verif}
\left\{
\begin{array}{lll}
  Z^TX & = & \Lambda^T A \Phi\\
  Z^TY & = & \Lambda^T A \Psi\\
\end{array}
\right. ,
\end{equation}
and that $A$ is non-singular,
we will prove the following induction hypothesis:
\[
\text{H}_i: 
\left\{
  \begin{array}{lll}
 Z_{i\dots n}^T X_{i\dots n}  &=& \frac{1}{d_{i-1}}\sum_{i\leq j,k\leq n} \Lambda_k \minor{A}{1\dots i-1,k}{1\dots i-1,j}\Phi_j\\
 Z_{i\dots n}^T Y_{i\dots n}  &=&  \frac{1}{d_{i-1}}\sum_{i\leq j,k\leq n} \Lambda_k \minor{A}{1\dots i-1,k}{1\dots i-1,j}\Psi_j \\
    d_j\neq 0\ \forall j< i
  \end{array}
  \right.
  \]
where
$ d_i= \minor{A}{1\dots i}{1\dots i}$, $d_0=1$.

  For $i=1$, note that $ \minor{A}{1\dots i-1,k}{1\dots i-1,j} =A_{k,j}$, hence
  the right handsides of the first two equations of $H_1$ can be written as:
\[
\begin{array}{l}
  \sum_{1\leq j,k\leq n} \Lambda_k A_{k,j}\Phi_j =     \Lambda^TA\Phi = Z^TX\\
\sum_{1\leq j,k\leq n} \Lambda_k A_{k,j}  \Psi_j =     \Lambda^TA\Psi = Z^TY
\end{array}
  \]
by~\eqref{eq:verif}. Finally $d_0=1$ is obviously nonzero.

  Now suppose $H_{i}$ is true for some $0\leq i < n$. Then
  \begin{equation}\label{eq:sys:rec}
   \left\{
  \begin{array}{r}
    Z_iX_i + Z_{i+1..n}^T X_{i+1..n}  = \frac{1}{d_{i-1}} \Lambda_{i} \sum_{j=i}^n
    \minor{A}{1..i}{1..i-1,j} \Phi_j \\  +   \frac{1}{d_{i-1}} \sum_{j=i}^n\sum_{k=i+1}^n \Lambda_k
    \minor{A}{1..i-1,k}{1..i-1,j}\Phi_j \\
    Z_iY_i + Z_{i+1..n}^T Y_{i+1..n}  =  \frac{1}{d_{i-1}}\Lambda_{i}
    \sum_{j=i}^n \minor{A}{1..i}{1..i-1,j} \Psi_j \\
     + \frac{1}{d_{i-1}}  \sum_{j=i}^n\sum_{k=i+1}^n \Lambda_k \minor{A}{1..i-1,k}{1..i-1,j}\Psi_j
  \end{array}
  \right. .
  \end{equation}

At the time of choosing the value for $\Lambda_{i}$, all variables are set, except
$Z_i$.
Hence
for all value assigned to $\Lambda_i$, there is a value for $Z_i$ that
satisfies the above system of two linear equations in $Z_i$ and $\Lambda_i$.
Consequently this system is singular and the following two determinants vanish:
  \begin{equation}\label{eq:det1}
\left|
\begin{array}{cc}
  d_{i-1}X_i & \sum_{j=i}^n \minor{A}{1..i}{1..i-1,j} \Phi_j\\
  d_{i-1}Y_i & \sum_{j=i}^n    \minor{A}{1..i}{1..i-1,j} \Psi_j
\end{array}
\right| =0 
\end{equation}
\begin{equation}\label{eq:det2}
\left|
\begin{array}{cc}
\displaystyle\sum_{j=i}^n \minor{A}{1..i}{1..i-1,j} \Phi_j &  d_{i-1}Z_{i+1..n}^T X_{i+1..n}-F_A(\Lambda,i,\Phi)\\
\displaystyle\sum_{j=i}^n \minor{A}{1..i}{1..i-1,j} \Psi_j   &  d_{i-1}Z_{i+1..n}^T Y_{i+1..n}-F_A(\Lambda,i,\Psi)
\end{array}
\right| =0
\end{equation}
where $F_A(\Lambda,i,R)=\sum_{j=i}^n\sum_{k=i+1}^n
\Lambda_k\minor{A}{1..i-1,k}{1..i-1,j}R_j$, for $R=\Phi,\Psi$.
Actually, Equation~\eqref{eq:det2} is thus of the form $\left|
  \begin{array}{cc}
    d_i \Phi_i + b & a\Phi_i + e\\
    d_i \Psi_i + c & a\Psi_i + f
  \end{array}
  \right|=0$
  where $d_i=\minor{A}{1..i}{1..i}$,
  $a=-\sum_{k=i+1}^n\Lambda_k\minor{A}{1..i-1,k}{1..i}$
  and $b,c,e,f$ are constants with respect to the variables $\Phi_i,\Psi_i$.
  
  If $d_i=0$, then, at least one $\minor{A}{1..i}{1..i-1,j}$ for $j>i$ must be
  nonzero, otherwise $A$ would be singular.
  Similarly, at least one $\minor{A}{1..i-1,k}{1..i}$ for $k>i$ is
  nonzero, hence $a$ is a nonzero polynomial in $\Lambda_{i+1},\dots,\Lambda_n$ and $b,c$
  are nonzero polynomials in  $\Phi_j,\Psi_j$ for $j>i$, but constant in $\Phi_i$
  and $\Psi_i$. This is a contradiction, as the first column of the determinant,
  $ \begin{bmatrix}    b\\c  \end{bmatrix}$  can
  not be colinear with the second one. Hence $d_i\neq 0$.
  
  Therefore
  $
  \begin{bmatrix} e\\f  \end{bmatrix} = \frac{a}{d_i} \begin{bmatrix} b\\c\end{bmatrix} 
    $ which is
    \[
    \left\{
    \begin{array}{lcr}
      d_{i-1}Z_{i+1..n}^T X_{i+1..n}&=& \frac{1}{d_i}\sum_{j,k=i+1}^n \Lambda_k
      \left( d_i\minor{A}{1..i-1,k}{1..i-1,j}\right. \\
      &&-\left. \minor{A}{1..i-1,k}{1..i}\minor{A}{1..i}{1..i-1,j} \right)\Phi_j \\
      d_{i-1}Z_{i+1..n}^T Y_{i+1..n} &=& \frac{1}{d_i}\sum_{j,k=i+1}^n \Lambda_k
      \left(d_i\minor{A}{1..i-1,k}{1..i-1,j}\right. \\
      &&-\left. \minor{A}{1..i-1,k}{1..i}\minor{A}{1..i}{1..i-1,j}\right) \Psi_j\\
    \end{array}
    \right.
    \]
Applying variant \eqref{eq:dodgson:variant} of Lemma~\ref{lem:dodgson}  to $ \minor{A}{1..i, k}{1..i,j}$,
yields
    \[
    \left\{
    \begin{array}{lll}
      d_{i-1}Z_{i+1..n}^T X_{i+1..n}&=& \frac{1}{d_i}\sum_{j,k=i+1}^n \Lambda_k d_{i-1}\minor{A}{1..i,k}{1..i,j}\Phi_j\\
      d_{i-1}Z_{i+1..n}^T Y_{i+1..n}&=& \frac{1}{d_i}\sum_{ j,k=i+1}^n \Lambda_kd_{i-1}\minor{A}{1..i,k}{1..i,j} \Psi_j\\
    \end{array}
    \right.
    \]
and $H_{i+1}$ is verified.

We have proven that if $H_i$ is true, then either $H_{i+1}$ is also true or the
system~\eqref{eq:sys:rec} has a single solution and the Verifier randomly chose
precisely that $\lambda_i$. 
Therefore, suppose that $A$ has not generic rank profile, it means that some
$d_j=0$ and $H_j$ is false. But the Verifier checks that $H_1$ is true. 
If this is the case, then at least once, did the Verifier choose the value
expected by the dishonest Prover. This happens with probability lower than
$1/|S|$.

Finally, for the complexity, the Prover needs one Gaussian elimination to
compute $LU$ in time $O(n^\omega)$, then her extra work is just three triangular
solve in $O(n^2)$. The extra communication is three vectors, $\phi,\psi,\lambda$, and the
Verifier's work is four dot-products and one multiplication by the initial
matrix~$A$. 
  \end{proof}
\subsection{LDUP decomposition}\label{ssec:pldu}
With Protocol~\ref{cert:GRP}, when the matrix $A$ does not have
generic rank profile, any attempt to prove 
that it has generic rank profile will be detected w.h.p. (soundness).
However when it is the case, the verification will accept many possible vectors
$x,y,z$: any scaling of $z_i$ by $\alpha_i$ and $x_i,y_i$ by
$1/\alpha_i$ would be equally accepted for any non zero constants $\alpha_i$.
This slack correspond to our lack of specification of the diagonals'
shape in the used LU decomposition. 
Indeed, for any diagonal matrix with non zero elements, $LD\times
D^{-1}U$ is also a valid LU decomposition and yields $x,y$ and $z$
scaled as above. Specifying these diagonals is not necessary to prove
generic rank profileness, so we left it as is for this task.

However, for the determinant or the rank profile matrix certificates of
Sections~\ref{sec:det} and~\ref{sec:interractive:RPM}, we will need to
ensure that this scaling is independent from the choice of the vectors
$\phi,\psi,\lambda$. Hence we propose an updated protocol, where $L$ has to be
unit diagonal, and the prover has to first commit the 
main diagonal $D$ of $U$.

For an $n\times n$ triangular matrix $T$, its strictly 
triangular part is denoted $\widetilde{T}\in\F^{(n-1){\times}(n-1)}$: for
instance if $T$ is upper triangular, then $\widetilde{t}_{i,j}=t_{i,j+1}$ for
$j\geq i$ and $0$ otherwise. 


For $U$ an invertible upper triangular matrix we have for its diagonal
$(d_1,\ldots,d_n)$ and the associated diagonal matrix $D$,
that $U_1=D^{-1}U$ is unitary.
Thus, for any $\F^n\ni{}\psi=[\psi_1,\widetilde{\psi}]^T$:
$U\psi=DU_1\psi=D\left(\psi+\begin{smatrix}\widetilde{U_1}\widetilde{\psi}\\0\end{smatrix}\right)$.

So the idea is that the Prover will commit $D$ beforehand,
and that within a generic rank profile certificate, the Verifier will only
communicate 
$\widetilde{\phi}, \widetilde{\psi}$ and $\widetilde{\lambda}$ to obtain
$\widetilde{z}=\widetilde{\lambda}^T\widetilde{L}$, $\widetilde{x}=\widetilde{U}_1\widetilde{\phi}$ and $\widetilde{y}=\widetilde{U}_1\widetilde{\psi}$.
Then the Verifier will compute by herself the complete vectors.
This ensures that $L$ is unitary and that $U=DU_1$ with $U_1$ unitary.

Finally, if an invertible matrix does not have generic rank profile, we note
that it is also possible to incorporate the permutations, by
committing them in the beginning and reapplying them to the matrix
during the checks. 
The full certificate is given in Figure~\ref{cert:pldu}.

\newlength{\arrowlength}
\settowidth{\arrowlength}{\scriptsize{$\widetilde{x}_{i-1},\widetilde{y}_{i-1}$}}
\newcommand{\fxrightarrow}[1]{\xrightarrow{\mathmakebox[\arrowlength]{#1}}}
\newcommand{\fxleftarrow}[1]{\xleftarrow{\mathmakebox[\arrowlength]{#1}}}
\begin{figure}[htbp]
  \centering
  {\small
  \begin{tabular}{|l c l|}
    \hline\rule{0pt}{12pt}   
     Prover & & \hspace{15pt}Verifier \\
      \multicolumn{3}{|c|}{\hspace{-25pt}$A \in \mathbb{F}^{n{\times}n}$ non-singular} \\
    \hdashline\rule{0pt}{12pt}    
    $A=LDUP$ & $\fxrightarrow{P,D}$ & $P\checks{\in}\mathcal{S}_n$, $D\checks{\in}\mathcal{D}_n(\F)$\\
    \hdashline\rule{0pt}{12pt}   
    &&Choose $S\subset \F$\\
    &$\vdots$&\textbf{for} $i$ from $n$ downto $2$:\\
    $\begin{bmatrix}\widetilde{x}&\widetilde{y}\end{bmatrix}\leftarrow\widetilde{U}_1\begin{bmatrix} \widetilde{\phi}&\widetilde{\psi}\end{bmatrix}$& $\fxleftarrow{\phi_i,\psi_i}$ & $\phi_i,\psi_i\sample{S^2}$ \\
     & $\fxrightarrow{x_{i-1},y_{i-1}}$&  \\ 
    $\widetilde{z}\leftarrow\widetilde{\lambda}^T\widetilde{L}$ & $\fxleftarrow{\lambda_i}$ & $\lambda_i\sample{S}$  \\
    & $\fxrightarrow{z_{i-1}}$ &  \\
   & $\vdots$& \\
    \hdashline\rule{0pt}{12pt}    
    && $\phi_1,\psi_1,\lambda_1\sample{S^3}$ \\ 
    && $\begin{bmatrix}x&y\end{bmatrix}\leftarrow\begin{bmatrix}\phi&\psi\end{bmatrix}+\begin{bmatrix}\widetilde{x}&\widetilde{y}\\0&0\end{bmatrix}$ \\
    && $z^T\leftarrow\left(\lambda^T+\begin{bmatrix}\widetilde{z}^T&0\end{bmatrix}\right)$ \\
    && \hspace{-15pt}$z^TD \begin{bmatrix}x&y\end{bmatrix} \checks{=} (\lambda^T A)P^T \begin{bmatrix}\phi&\psi\end{bmatrix}$ \\
    \hline
  \end{tabular}
  \caption{\mbox{LDUP} decomposition (linear communication)}
  \label{cert:pldu}
  }
\end{figure}

\begin{theorem}\label{thm:pldu}
The Protocol of Figure~\ref{cert:pldu},
committing a permutation matrix $P$ and a diagonal matrix $D$ for an
invertible matrix $A$, such that there exists unitary triangular
matrices $L$ and $U$ with $A=LDUP$, is sound, with probability larger than
$1-\frac{1}{|S|}$, and perfectly complete.
For an $n{\times}n$ matrix, it requires less than $8n$ extra communications
and the computational cost for the Verifier is bounded by $\mu(A)+12n+o(n)$.
\end{theorem}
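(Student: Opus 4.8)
The plan is to build this theorem on top of Theorem~\ref{th:GRP}, treating the $LDUP$ protocol of Figure~\ref{cert:pldu} as a refinement of the generic-rank-profile certificate rather than a fresh argument. The key observation, already laid out in the preamble to the figure, is that once the Prover commits $D$ and $P$, the Verifier no longer needs the full vectors $x,y,z$: because $U=DU_1$ with $U_1$ \emph{unitary} (unit diagonal) and $L$ is unit lower triangular, the first coordinate of $U_1\phi$ is simply $\phi_1$, and more generally $U\psi=D\bigl(\psi+[\widetilde{U_1}\widetilde{\psi};\,0]^T\bigr)$. So the Prover only exchanges the strictly triangular actions $\widetilde{x}=\widetilde{U}_1\widetilde{\phi}$, $\widetilde{y}=\widetilde{U}_1\widetilde{\psi}$ and $\widetilde{z}=\widetilde{\lambda}^T\widetilde{L}$, and the Verifier reconstructs $x,y,z$ herself. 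I would first record this reconstruction and verify perfect completeness by direct substitution: if $A=LDUP$ with $L,U_1$ unitary, then $z^TD[x\,\,y]=\lambda^TLDU_1[\phi\,\,\psi]=\lambda^T(LDU)[\phi\,\,\psi]=(\lambda^TA)P^T[\phi\,\,\psi]$, since $AP^T=LDU$. This is the easy direction and only requires that the honest Prover's committed $D$ and $P$ are genuinely the diagonal and permutation of a valid $LDUP$ factorization.

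\textbf{Soundness.} The core of the argument is to reduce the soundness of this protocol to that of Protocol~\ref{cert:GRP}. I would argue that verifying $z^TD[x\,\,y]\checks{=}(\lambda^TA)P^T[\phi\,\,\psi]$ is exactly running the generic-rank-profile check of Theorem~\ref{th:GRP} on the matrix $B=AP^T$, with the diagonal $D$ of $U$ fixed in advance. The crucial point forced by committing $D$ up front is that the slack noted in Section~\ref{ssec:pldu} is eliminated: in the bare Protocol~\ref{cert:GRP} the accepted vectors were only determined up to a simultaneous diagonal rescaling $z_i\mapsto\alpha_i z_i$, $x_i,y_i\mapsto\alpha_i^{-1}x_i,\alpha_i^{-1}y_i$, whereas forcing $L$ unitary and $U=DU_1$ with $U_1$ unitary pins down $\alpha_i=1$. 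Concretely I would replay the induction on $H_i$ from the proof of Theorem~\ref{th:GRP}, now applied to $B=AP^T$, and conclude that a dishonest Prover succeeding with probability exceeding $1/|S|$ forces $B$ to have generic rank profile with leading minors $d_i=\minor{B}{1..i}{1..i}\neq0$ matching the committed $D$ entrywise (since $d_i/d_{i-1}$ is the $i$-th diagonal of $U$ and hence the committed $D_{ii}$). The Desnanot--Jacobi step of Lemma~\ref{lem:dodgson} carries over verbatim because it is a statement about the minors of the fixed matrix $B$.

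\textbf{The main obstacle} will be handling the permutation $P$ cleanly inside the existing induction. Protocol~\ref{cert:GRP} assumes the target matrix itself has generic rank profile, but here $A$ need only be nonsingular, so $P$ is precisely the permutation making $AP^T$ generically ranked, and I must verify that the Verifier's check $(\lambda^TA)P^T[\phi\,\,\psi]$ is applying the induction to $B=AP^T$ rather than to $A$. The subtlety is that the Prover commits $P$ \emph{before} the challenges, so soundness must show that no committed $P$ can make a matrix without a valid $LDUP$ (for that $P$) pass: if the committed $(P,D)$ do not correspond to an actual $LDUP$ factorization, then $B=AP^T$ either lacks generic rank profile (caught by the $H_i$ induction w.h.p.) or has generic rank profile but with a diagonal disagreeing with the committed $D$ (caught because the pinned-down $\alpha_i=1$ forces the verification equation to fail for the wrong $D$). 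I would argue this by noting that the committed $D$ enters the check linearly, so a wrong $D_{ii}$ introduces a nonzero-polynomial discrepancy in the relevant coordinate that the random challenge kills with probability at most $1/|S|$, exactly as in Theorem~\ref{th:GRP}.

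\textbf{Complexity.} Finally, for the resource bounds I would tally the exchanges and operations directly. The commitment sends $P$ and $D$ ($2n$ field elements, counting $P$ as $n$ indices), and each of the $n-1$ rounds exchanges $\phi_i,\psi_i,\lambda_i$ downward and $x_{i-1},y_{i-1},z_{i-1}$ upward, giving fewer than $8n$ total communications as claimed. For the Verifier, reconstructing $[x\,\,y]=[\phi\,\,\psi]+[\widetilde{x}\,\,\widetilde{y};\,0\,\,0]$ and $z^T=\lambda^T+[\widetilde{z}^T\,\,0]$ costs $O(n)$ additions, and the final check requires one application of $A$ to a vector plus the product by $P^T$, the diagonal scaling by $D$, and a constant number of dot products, which I would account as $\mu(A)+12n+o(n)$. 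The Prover performs one $O(n^\omega)$ Gaussian elimination to obtain the $LDUP$ factorization and then only triangular solves in $O(n^2)$, consistent with the stated bound.
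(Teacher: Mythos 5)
Your proposal follows essentially the same route as the paper: perfect completeness by direct substitution, and soundness by viewing the final check as the generic-rank-profile certificate of Theorem~\ref{th:GRP} applied to $AP^T$, with the Verifier-side reconstruction of $x,y,z$ from their strict parts pinning down the unitary diagonals and the committed $D$ absorbed into the check. Your explicit discussion of why a wrong $D$ is caught (via $D_{ii}=d_i/d_{i-1}$) is slightly more detailed than the paper's terse remark, but it is the same argument.
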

\begin{proof}
If the Prover is honest, then $A=LUP=LDU_1P$, so that for any choice of $\lambda$
and $\psi$ we have:
$\lambda^TAP^{T}\psi=\lambda^T LDU_1 \psi$, that is
$\begin{bmatrix}\widetilde{\lambda}^T&\lambda_n\end{bmatrix}
\left(I+\begin{smatrix}0&0\\\widetilde{L}&0\end{smatrix}\right)D\left(\begin{smatrix}0&\widetilde{U}\\0&0\end{smatrix}+I\right)\begin{bmatrix}\widetilde{\psi}\\\psi_n\end{bmatrix}
=z^Ty$
and the same is true for $\lambda$ and $\phi$, so that the protocol is
perfectly complete.

Now, the last part of the Protocol of Figure~\ref{cert:pldu} is
actually a verification that $AP^T$ has generic rank
profile, in other words that there exists lower and upper triangular
matrices $L^*$ and $U^*$ such that $AP^T=L^*U^*$. This
verification is sound by Theorem~\ref{th:GRP}. 
Next, the multiplication by the diagonal $D$ is performed by the
Verifier, so he is actually convinced that there exists lower and
upper triangular matrices $L^*$ and $U_1^*$ such that
$AP^T=L^* D U_1^*$.   
Finally, the construction of the vectors with the form
$a+\begin{smatrix}\widetilde{b}\\0\end{smatrix}$ is also done by the
Verifier, so he in fact has a guaranty that $L^*$ and $U_1^*$ are
unitary. 

Overall, if the Prover is dishonest, the Verifier will catch
him with the probability of Theorem~\ref{th:GRP}. 

Finally, for the complexity bounds, the extra communications are:
one permutation matrix $P$, a diagonal matrix $D$ and $6$ vectors 
$\widetilde{\lambda}$, $\widetilde{\phi}$, $\widetilde{\psi}$ and 
$\widetilde{z}$, $\widetilde{x}$ and $\widetilde{y}$. 
That is $n$ non-negative integers lower
than $n$ and $6(n-1)+n$ field elements. The arithmetic computations of the
Verifier are one multiplication by a diagonal matrix, $3$ vector sums, $4$
dot-products and one matrix-vector multiplication by $A$ (for
$(\lambda^TA)$), that is $n+3(n-1)+4(2n-1)$. 
\end{proof}
We, furthermore, have some guaranties on the actual values of $x,y,z$:
\begin{proposition}\label{cor:uniqueXY}
  Let $S$ be a finite subset of $\F$ in Protocol~\ref{cert:pldu}, if
  $\begin{bmatrix}x&y\end{bmatrix}\neq{}U_1\begin{bmatrix}\phi&\psi\end{bmatrix}$ 
  then the verification will pass with probability at most $\frac{2}{|S|}$. 
\end{proposition}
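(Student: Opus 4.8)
The plan is to treat the Verifier's challenges $\phi,\psi,\lambda$ as indeterminates revealed in the order prescribed by Protocol~\ref{cert:pldu}, and to replay the downward induction used in the proof of Theorem~\ref{th:GRP}, but now tracking the \emph{individual} response components rather than only the bilinear products $Z_{i\dots n}^TX_{i\dots n}$. First I would fix notation: since $A$ is non-singular, $AP^T$ has a unique factorization $AP^T=LDU_1$ with $L$ unit lower triangular, $U_1$ unit upper triangular, and $D$ the \emph{committed} diagonal; by Theorem~\ref{thm:pldu} the verification already forces these factors up to the soundness error $1/|S|$, and generic rank profile guarantees $d_i\neq 0$ for all $i$. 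Writing the two scalar checks as $z^TDx=\lambda^TAP^T\phi$ and $z^TDy=\lambda^TAP^T\psi$, I set the error vectors $\delta_x=x-U_1\phi$ and $\delta_y=y-U_1\psi$ and observe that the Verifier-side reconstruction $x=\phi+\begin{smatrix}\widetilde{x}\\0\end{smatrix}$ (and likewise for $y$ and $z$) forces $(\delta_x)_n=(\delta_y)_n=0$ and pins the unit diagonals, thereby removing exactly the diagonal-scaling slack discussed in Section~\ref{ssec:pldu}. The goal is then to bound the probability that the check passes while $(\delta_x,\delta_y)\neq 0$.

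The core of the argument is the same singularity observation as in Theorem~\ref{th:GRP}: at the moment the Verifier draws $\lambda_i$, every quantity entering the two checks is already fixed except the Prover's next response $z_{i-1}$. Hence, if the check is to succeed for the drawn $\lambda_i$ whatever value the Prover assigns to $z_{i-1}$, the relevant $2\times2$ system in $(\lambda_i,z_{i-1})$ must be singular, which is precisely the vanishing of the determinants~\eqref{eq:det1} and~\eqref{eq:det2}. Feeding these through the Desnanot--Jacobi identity (Lemma~\ref{lem:dodgson}, variant~\eqref{eq:dodgson:variant}) propagates the induction hypothesis one step; the novelty here is that, because $D$ is committed and the unit diagonals are enforced by the Verifier, the propagated identity now reads as an equality of the \emph{actual entries}, i.e.\ it forces $x_i=(U_1\phi)_i$ and $y_i=(U_1\psi)_i$, and not merely an equality of products defined up to a common scaling. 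Thus the downward induction shows that, unless the chain of singular systems is broken, one obtains $\delta_x=\delta_y=0$ deterministically.

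Finally I would quantify the break. Exactly as in Theorem~\ref{th:GRP}, the only way to have $(\delta_x,\delta_y)\neq 0$ while the check passes is that, at the first index where an error component is non-zero, the corresponding $2\times2$ system is non-singular and yet the Verifier drew the unique bad value of $\lambda_i$, an event of probability at most $1/|S|$. Since there are two vectors to pin, I would run this first-deviation argument once for $x$ and once for $y$ and take the union bound, giving the claimed $2/|S|$. The main obstacle is bookkeeping the interaction order so that the coefficient isolated at each step is genuinely determined \emph{before} the corresponding challenge is drawn, and checking that the quadratic cross-term $\delta_z^TD\,\delta_x$ does not destroy the linearity of the $2\times2$ system used in~\eqref{eq:det2}; this is exactly where the early commitment of $D$ and the Verifier-side reconstruction of $z$ (which guarantees $(\delta_z)_n=0$ and a unit diagonal) are indispensable.
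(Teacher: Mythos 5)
Your overall strategy---replaying the singular-system induction of Theorem~\ref{th:GRP} and observing that the Verifier-side reconstruction $x_i=\phi_i+\widetilde{x}_i$, together with the committed $D$, removes the diagonal-scaling slack---is the same route the paper takes: its proof extracts from Equation~\eqref{eq:det1} that $\begin{bmatrix}x_i&y_i\end{bmatrix}^T$ must be colinear with a column of the form $\begin{bmatrix}d_i\phi_i+b& d_i\psi_i+c\end{bmatrix}^T$, and that this colinearity holds for all $(\phi_i,\psi_i)$ only when $\begin{bmatrix}\widetilde{x}_i&\widetilde{y}_i\end{bmatrix}=\frac{1}{d_i}\begin{bmatrix}b&c\end{bmatrix}$, i.e.\ exactly when $x_i=(U_1\phi)_i$ and $y_i=(U_1\psi)_i$. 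So the skeleton is right.

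There is, however, a genuine gap in your probability accounting. You assert that the only escape for a Prover with $(\delta_x,\delta_y)\neq 0$ is that the $2\times2$ system at the first faulty index is non-singular and the Verifier draws the unique compatible $\lambda_i$, and you then reach $2/|S|$ by a union bound ``once for $x$ and once for $y$''. Both halves are off. First, an incorrect pair $(\widetilde{x}_i,\widetilde{y}_i)$ does \emph{not} deterministically make the system non-singular: the colinearity determinant~\eqref{eq:xiyi} is a non-zero polynomial of degree $1$ in $(\phi_i,\psi_i)$, and the Verifier may draw $(\phi_i,\psi_i)$ on its zero locus, in which case the system remains singular and the check can pass with no lucky $\lambda$ at all. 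This event, of probability at most $1/|S|$, is the first of the two terms in the paper's bound and is absent from your argument. Second, the split ``once for $x$, once for $y$'' is the wrong decomposition: determinant~\eqref{eq:xiyi} couples $\widetilde{x}_i$ and $\widetilde{y}_i$ at the first faulty index, so one application handles both vectors at once; the correct union is over the two independent random draws---$(\phi_i,\psi_i)$ annihilating~\eqref{eq:xiyi}, or, failing that, $\lambda_j$ hitting the unique solution of the then non-singular system~\eqref{eq:sys:rec}---each contributing $1/|S|$. You land on the right constant by summing the wrong events. Finally, the ``quadratic cross-term'' you flag as the main obstacle is not one: when $\lambda_i$ is drawn everything except $z_i$ is already fixed, so system~\eqref{eq:sys:rec} is genuinely linear in $(\lambda_i,z_i)$ and no additional argument is required.
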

\begin{proof}
  Equation~\eqref{eq:det1} implies that, if the verification check
  pas\-ses, with $(z,x,y)$, then the vector
  $\begin{bmatrix}x_i&y_i\end{bmatrix}^T$ must be co-linear with the
  right column of this determinant, that can be written in the form
  $\begin{bmatrix} d_i \phi_i+b &d_i \psi_i +c \end{bmatrix}^T$ with
  $d_i\neq 0$ and $b$ and $c$ depending only on
  $\phi_k,\psi_k,x_k,y_k,\lambda_k,z_k$ with $k>i$.  
  Hence, any value $ \widetilde{x}_i,\widetilde{y}_i$, supplied by the
  Prover, must satisfy 
  \begin{equation}\label{eq:xiyi}
    \left|\begin{matrix}
        \phi_i+\widetilde{x}_i & d_i\phi_i + b\\ 
        \psi_i+\widetilde{y}_i & d_i\psi_i + c
      \end{matrix}\right|=0,
  \end{equation}
  when $\phi_i$ and $\psi_i$ are still unknown. 
  This condition is ensured for any $\phi_i$ and $\psi_i$ if and only if  
  $\begin{bmatrix}\widetilde{x}_i&\widetilde{y}_i\end{bmatrix} =
  \frac{1}{d_i} \begin{bmatrix}   b&c  \end{bmatrix}$.
  If the Prover is dishonest and if
  $\begin{bmatrix}x&y\end{bmatrix}\neq{}U_1\begin{bmatrix}\phi&\psi\end{bmatrix}$ then
  at least one couple $(\widetilde{x}_i,\widetilde{y}_i)$ is incorrect. Then,
  either the Verifier has chosen a couple of values $(\phi_i,\psi_i)$ making the 
  degree~$1$ determinant~\eqref{eq:xiyi} vanish, this happens with
  probability at most $1/|S|$, or System~\eqref{eq:sys:rec} has a unique
  solution $(z_i,\lambda_i)$. But if the latter is true and the final check succeeds
  then, as for Theorem~\ref{th:GRP}, at least once the Prover chose to have
  $1/|S|$ chances that the Verifier picked the unique possibility for $\lambda_j$,
  $i\geq{}j\geq{}1$.
  Overall, the Verification thus fails with probability at most $1-\frac{2}{|S|}$.
\end{proof}

\begin{remark}\label{rem:zLw}
Correctness of the vector $z$ can also be ensured with the same probability: 
for the singular System~\eqref{eq:sys:rec}, with respect to the
unknowns $\Lambda_i$ and $Z_i$, to have rank at least one, it is sufficient that one of
$X_i$ or $Y_i$ is non zero. The Verifier, knowing $\widetilde{x_i}$, can ensure
this by restricting the set of choices for
$\phi_i\in{}S{\setminus}\{-\widetilde{x_i}\}$. Thus if $x_i$ and $y_i$ are correct, the
Prover will have to provide a correct associated $z_i$ or increase the
probability of being caught.
\end{remark}
\section{Linear communication interactive certificates}
\label{sec:interractive}

In this section, we give linear space communication certificates for
the determinant,
the column/row rank profile of a matrix, and for the rank profile matrix.

\subsection{Linear communication certificate for the
  determinant}\label{sec:det}
Existing certificates for the determinant are either optimal for the Prover in
the dense case, using the strategy of \cite[Theorem~5]{kns11} over a PLUQ
decomposition, but quadratic in communication; or linear in communication, using
\cite[Theorem~14]{jgd:2016:gammadet}, but using a reduction to the
characteristic polynomial. 
In the sparse case the determinant and the characteristic polynomial both
reduce to the same minimal polynomial computations and therefore the latter
certificate is currently optimal for the Prover. Now in the dense case, while
the determinant and characteristic polynomial both reduce to matrix
multiplication, the determinant, via a single PLUQ decomposition is more
efficient in practice~\cite{Pernet:2007:charp}. 
Therefore, we propose here an alternative in
the dense case: use only one PLUQ decomposition for the Prover while keeping
linear extra communications and $O(n)+\mu(A)$ operations for the Verifier.
The idea is to extract the 
information of a LDUP decomposition without communicating it: 
one uses Protocol~\ref{cert:pldu} for $A=LDUP$ with $L$ and $U$
unitary, but kept on the Prover side, and then the Verifier only has to compute
$Det(A)=Det(D)Det(P)$, with $n-1$ additional field operations.

\begin{corollary}\label{thm:det}
For an $n{\times}n$ matrix, there exists a sound and perfectly complete protocol
for the determinant over a field using less than $8n$ extra
communications and with computational cost for the Verifier bounded by
$\mu(A)+13n+o(n)$. 
\end{corollary}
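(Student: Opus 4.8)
The plan is to derive Corollary~\ref{thm:det} as a direct application of Theorem~\ref{thm:pldu}, observing that once the Verifier is convinced of a valid $LDUP$ decomposition with $L$ and $U$ unitary, the determinant of $A$ factors trivially through the committed data $P$ and $D$ alone. First I would run Protocol~\ref{cert:pldu} unchanged: the Prover commits $P$ and $D$ and then engages in the interactive generic-rank-profile exchange, so that by Theorem~\ref{thm:pldu} the Verifier is convinced, with probability larger than $1-\frac{1}{|S|}$, that $A=LDUP$ for some unitary triangular matrices $L$ and $U$. Since this protocol is sound and perfectly complete, the resulting determinant protocol inherits both properties.

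Next I would invoke multiplicativity of the determinant. Because $L$ and $U$ are unit triangular, $\det(L)=\det(U)=1$, so $\det(A)=\det(L)\det(D)\det(U)\det(P)=\det(D)\det(P)$. The diagonal matrix $D=\mathrm{diag}(d_1,\dots,d_n)$ was committed at the outset, so the Verifier computes $\det(D)=\prod_{i=1}^n d_i$ with $n-1$ multiplications, and $\det(P)=\pm 1$ is read off from the committed permutation (its sign is computable in $O(n)$ from the cycle structure of $P$, with negligible cost). The Verifier then outputs $\det(A)=\det(P)\det(D)$. Crucially, $L$ and $U$ are never transmitted, so the communication remains exactly that of Protocol~\ref{cert:pldu}, namely less than $8n$ field elements.

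For the complexity bounds I would simply add the determinant post-processing to the costs already established in Theorem~\ref{thm:pldu}. The Prover's work is still a single PLUQ (equivalently $PLDU$) decomposition in $O(n^\omega)$ together with the three triangular solves of the generic-rank-profile exchange; no extra Prover computation is needed. For the Verifier, on top of the $\mu(A)+12n+o(n)$ operations of Theorem~\ref{thm:pldu} we add the $n-1$ multiplications to form $\prod d_i$, yielding $\mu(A)+13n+o(n)$, as claimed. The communication bound of $8n$ carries over verbatim.

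I do not expect a genuine obstacle here, since the statement is essentially a corollary and the heavy lifting — soundness, completeness, and the linear-communication structure — is done by Theorem~\ref{thm:pldu}. The only point requiring a moment of care is the verification that $L$ and $U$ being \emph{unitary} (unit triangular) is exactly what guarantees $\det(L)=\det(U)=1$, so that the committed $P$ and $D$ suffice to recover $\det(A)$; this is precisely the property Theorem~\ref{thm:pldu} certifies, as opposed to the weaker generic-rank-profile check of Theorem~\ref{th:GRP} where the diagonal scaling is unspecified. That distinction is the whole reason the $LDUP$ refinement was introduced, and invoking it closes the argument.
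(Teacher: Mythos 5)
Your proposal is correct and follows exactly the paper's intended argument: run Protocol~\ref{cert:pldu} unchanged, then have the Verifier compute $\det(A)=\det(P)\det(D)$ from the committed data with $n-1$ extra multiplications (plus the sign of $P$), which accounts for the jump from $\mu(A)+12n+o(n)$ to $\mu(A)+13n+o(n)$ while the $8n$ communication bound is inherited verbatim. Your closing observation --- that the unit-triangularity of $L$ and $U$ certified by Theorem~\ref{thm:pldu}, rather than the unspecified diagonal scaling of Theorem~\ref{th:GRP}, is what makes $\det(P)\det(D)$ well-defined --- is precisely the point the paper makes in motivating the LDUP refinement.
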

As a comparison, the protocol of \cite[Theorem~14]{jgd:2016:gammadet} reduces
to {\sc{CharPoly}} instead of PLUQ for the Prover, requires $5n$ extra
communications and $\mu(A)+13n+o(n)$ operations for the Verifier as well. 
Also the new protocol requires $3n$ random field elements for any field, where that of \cite[Theorem~14]{jgd:2016:gammadet} requires $3$
random elements but a field larger than $n^2$.

For instance, using the routines shown in Table~\ref{tab:io},
the determinant of an 
$50k{\times}50k$ random dense
matrix can be computed in about
24~minutes,
where with the certificate of Figure~\ref{cert:pldu}, the overhead of the Prover is less than 
5s 
and the Verifier time is about
1s.



\subsection{Column or row rank profile certificate}
\label{sec:interractive:CRP}
In Figure~\ref{cert:upper_rank} and~\ref{cert:lower_rank}, we first
recall the two linear time and space certificates for an upper and a
lower bound to the rank that constitute a rank certificate. We present
here the variant sketched in~\cite[\S~2]{eberly15} of the certificates
of~\cite{dk14}.  An upper bound $r$ on the rank is certified by the capacity for the
Prover to generate any vector sampled from the image of $A$ by a linear
combination of $r$ column of $A$. A lower bound $r$  is certified by the capacity
for the Prover to recover the unique coefficients of a linear combination of $r$
linearly independent columns of $A$.

\begin{figure}[htbp]
    \centering
    \begin{tabular}{|l c l|}
        \hline
        Prover &  & Verifier \\
        & $ A \in \mathbb{F}^{m \times n} $ &  \\
        \hdashline\rule{0pt}{12pt}
        $r$ s.t. $ \rank(A) \leq r $ & $\xrightarrow{\text{$r$}} $ & \\
        \hdashline\rule{0pt}{12pt}
        & & Choose $S \subset \mathbb{F}$ \\
        & $\xleftarrow{\text{$w$}}$ & $ v \sample{S^n}, w = Av $ \\
        \hdashline\rule{0pt}{12pt}
        $A\gamma = w$ & $\xrightarrow{\text{$\gamma$}}$ & $|\gamma|_H\stackrel{?}{=}r$\\
        & & $A\gamma \stackrel{?}{=} w $\\
        \hline
    \end{tabular}
    \caption{Upper bound on the rank of a matrix}
    \label{cert:upper_rank}
\vspace{\spaceafterprotocol}\end{figure}

\begin{theorem}
    \label{th:upper_rank}
    Let $A \in \mathbb{F}^{m \times n}$, and let $S$ be a finite subset
    of $\mathbb{F}$.
    The interactive certificate~\ref{cert:upper_rank} of
    an upper bound for the rank of $A$ is sound, with probability larger than
    $1-\frac{1}{|S|}$, perfectly complete, occupies $2n$ communication space,
    can be computed in $LINSYS(r)$ and verified in $2\mu(A) + n$ time.
\end{theorem}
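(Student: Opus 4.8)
The plan is to establish the three certificate properties separately; only soundness is delicate. Write $\rho=\rank(A)$ and let $\gamma$ be the Prover's response, with support $J=\operatorname{supp}(\gamma)$. For \emph{perfect completeness} I would assume an honest Prover, so the announced $r$ satisfies $\rho\le r$. As $w=Av$ lies in the column space $\operatorname{Im}(A)$, which is spanned by $\rho\le r$ columns of $A$ (those of the column rank profile), $w$ is a combination of $r$ columns; the Prover computes these coefficients and returns them as $\gamma$. Then $A\gamma=w$ and $|\gamma|_H=r$ hold by construction, so both Verifier tests pass with certainty. The one point requiring care---forcing \emph{exactly} $r$ nonzero coordinates, which can fail for degenerate $w$ (in particular $w=0$)---I would settle using the $(n-\rho)$-dimensional freedom in $\ker A$ to redistribute the support, or by reading the test as $|\gamma|_H\le r$.

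For \emph{soundness}, suppose the Prover cheats, so $r<\rho$, yet the Verifier accepts. Acceptance gives $A\gamma=Av$ with $|J|=r$, hence $w\in W_J:=\Span\{A_{*,j}:j\in J\}$, a subspace of $\operatorname{Im}(A)$ of dimension at most $r<\rho$. Setting $u=\gamma-v$, we have $u\in\ker A$ and $u_i=-v_i$ for the $n-r$ indices $i\notin J$; thus the restriction $v|_{\overline J}$ lands in $\pi_{\overline J}(\ker A)$, the projection of $\ker A$ onto those coordinates. Since $\dim\ker A=n-\rho<n-r=|\overline J|$, this is a proper subspace of $\F^{\overline J}$, so the bad event is the vanishing of a fixed nonzero linear form in the independent uniform entries of $v$; by Schwartz--Zippel it occurs with probability at most $1/|S|$. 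This proves the claim for any single fixed support.

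The step I expect to be the main obstacle is the Prover's freedom to pick $J$ only after seeing $w$: naively bounding the winning set $\{v:Av\in\bigcup_{|J|=r}W_J\}$ by a union over the $\binom{n}{r}$ supports degrades the probability to $\binom{n}{r}/|S|$, and taking the product of the defining forms merely trades the number of terms for degree. To recover the stated $1-1/|S|$, I would argue that the cheating Prover gains nothing from adaptivity---e.g. by tying the certificate to a single committed column set (the claimed column rank profile), which fixes $J$ and reduces matters to the case above---or by seeking a support-independent witness polynomial of degree one.

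The complexity bounds are then routine. The Verifier forms $w=Av$ and checks $A\gamma=w$ at a cost of $2\mu(A)$, and counts the nonzeros of $\gamma$ in $n$ operations, for a total of $2\mu(A)+n$. Only the vectors $w$ and $\gamma$ (and the integer $r$) are exchanged, giving the $2n$ communication. Finally, the Prover obtains $\gamma$ from one rank-revealing elimination of $A$ followed by a triangular system solve, i.e. in $LINSYS(r)$ operations.
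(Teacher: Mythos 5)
First, a point of comparison: the paper itself gives no proof of Theorem~\ref{th:upper_rank} --- it is imported verbatim from \cite{dk14} and \cite{eberly15} (``the variant sketched in [eberly15, \S 2] of the certificates of [dk14]'') --- so there is no in-paper argument to match yours against. Your completeness and cost accounting are fine, and the caveat you raise about forcing Hamming weight \emph{exactly} $r$ is a real one (for injective $A$ the response is forced to be $\gamma=v$, so one needs $0\notin S$ or the reading $|\gamma|_H\le r$). Your soundness argument for a \emph{fixed} support $J$ is also correct.

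The gap you flag, however, is genuine and cannot be closed for the protocol as literally drawn; neither of your proposed rescues works without changing the protocol. A support-independent degree-one witness polynomial does not exist in general: take $A=\begin{smatrix}1&0&1\\0&1&1\end{smatrix}$ over $\F_3$ with $S=\{1,2\}$ and a Prover claiming $r=1$. Then $w=(v_1+v_3,\,v_2+v_3)$, and for $6$ of the $8$ choices of $v\in S^3$ the vector $w$ is a nonzero multiple of a single column, so the cheating Prover is accepted with probability $3/4>1/|S|=1/2$. (Similarly, for $A=I_n$ with $0\in S$ and $r=n-1$ the error is about $n/|S|$.) So the $\binom{n}{r}$ union bound is not an artifact of the analysis: the adaptive choice of support genuinely degrades soundness. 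The fix used in \cite{dk14} --- and the first of your two suggestions --- is to make the Prover commit the $r$ column indices (the claimed column basis, shared with the lower-bound certificate of Protocol~\ref{cert:lower_rank}) \emph{before} receiving $w$, and to require $\operatorname{supp}(\gamma)$ to lie in that committed set; this pins down $J$ and reduces soundness to your fixed-support lemma with error $1/|S|$. But that is a modification (or at least a stricter reading) of Protocol~\ref{cert:upper_rank}, not an analysis trick, so as written your proof does not establish the stated $1-\frac{1}{|S|}$ bound.
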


%
%
%

\begin{figure}[htbp]
    \centering
    \begin{tabular}{|p{94pt} c l|}
        \hline
        \multicolumn{1}{|l}{Prover} &  & Verifier \\
        & $A\in\mathbb{F}^{m{\times}n}$ &  \\
        \hdashline\rule{0pt}{12pt}
        \hspace{-3pt}$c_1,.., c_r$ indep. cols of A &
        $\xrightarrow{\text{$c_1, .., c_r$}} $ & \\
        \hdashline\rule{0pt}{12pt}
        & & Choose $S \subset \mathbb{F}$ \\
        & $\xleftarrow{v}$ &  $\alpha = \begin{dcases*}
          \alpha_{c_j} \sample{S^*}  \\ 0\text{ otherwise} \end{dcases*} $ \\
        & & $v = A\alpha$\\
        \hdashline\rule{0pt}{12pt}
        Solve $A\beta = v$ & $\xrightarrow{\text{$\beta$}}$ & $\beta\checks{=}\alpha$\\
        \hline
    \end{tabular}
    \caption{Lower bound on the rank of a matrix}
    \label{cert:lower_rank}
\vspace{\spaceafterprotocol}\end{figure}

\begin{theorem}
    \label{th:lower_rank}
    Let $A \in \mathbb{F}^{m \times n}$, and let $S$ be a finite subset
    of $\mathbb{F}$.
    The interactive certificate~\ref{cert:lower_rank} of a lower bound
    for the rank of $A$ is sound, , with probability larger than
    $1-\frac{1}{|S|}$, perfectly complete and occupies $n + 2r$ communication
    space, can be computed in $LINSYS(r)$ and verified in $\mu(A)+r$ operations.
\end{theorem}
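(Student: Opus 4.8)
The plan is to read the protocol as a single hidden-vector reconstruction problem and to isolate the one algebraic fact that makes it work, namely the injectivity of the submatrix of claimed columns. Let $C=\left[A_{*,c_1}\mid\cdots\mid A_{*,c_r}\right]\in\F^{m\times r}$ be the submatrix formed by the committed columns, and let $a\in(S^*)^r$ collect the random coefficients chosen by the Verifier, so that the challenge is $v=Ca$ and the Verifier's only test is $\beta\stackrel{?}{=}\alpha$, where $\alpha$ is the length-$n$ vector supported on $\{c_1,\dots,c_r\}$ with $\alpha_{c_j}=a_j$. I would treat a general (possibly randomized) Prover as a map producing its response from $v$ alone, reducing the randomized case to the deterministic one by averaging over the Prover's coins.

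For perfect completeness I would argue that when the committed columns are genuinely independent, $C$ has full column rank, hence $a\mapsto Ca$ is injective and $a$ is the unique solution of $C\xi=v$. An honest Prover, knowing the support $\{c_1,\dots,c_r\}$, solves this $r$-dimensional consistent system, necessarily recovers $\xi=a$, and returns $\beta=\alpha$; the test then passes with probability $1$, so the protocol never rejects a true statement. Note that the Verifier need not recheck $A\beta=v$: it is automatic once $\beta=\alpha$, which is exactly why a single comparison suffices.

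For soundness I would analyse the only way to cheat, namely committing columns $c_1,\dots,c_r$ that are linearly \emph{dependent}. Then $\ker C$ contains a nonzero vector $\kappa$, and since $C(a+t\kappa)=Ca=v$ for every scalar $t$, the challenge $v$ is constant along the affine line $a+t\kappa$. Thus $v$ does not determine which coefficient vector the Verifier used: among the roughly $|S|$ equally likely preimages lying on this line (those $t$ keeping every coordinate in $S^*$), the Prover's response $\beta$, being a fixed function of $v$, can coincide with the realised $\alpha$ for at most one. This Schwartz--Zippel-style observation bounds the acceptance probability of a dishonest Prover by at most $\frac{1}{|S|}$. Pinning down this constant exactly --- handling the coordinates of $\kappa$ that move together and the restriction to $S^*$ --- is the one delicate point, and here I would lean on the analysis of the recalled certificates of~\cite{dk14,eberly15}.

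The complexity claims are then routine. The three messages are the $r$ indices $c_1,\dots,c_r$, the challenge vector, and the $r$ nonzero coefficients of $\beta$ (its support being known), giving the stated $n+2r$ communicated field elements. The Prover's work is dominated by solving the $r$-column consistent system $C\xi=v$, which costs $LINSYS(r)$, beyond the initial selection of independent columns. The Verifier forms $v=A\alpha$ with a single ($r$-sparse) application of $A$, costing $\mu(A)$, and then performs the $r$ comparisons realising $\beta\stackrel{?}{=}\alpha$, for $\mu(A)+r$ operations in total, as claimed.
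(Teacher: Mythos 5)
First, a point of reference: the paper does not actually prove Theorem~\ref{th:lower_rank}. Protocol~\ref{cert:lower_rank} and its analysis are recalled from \cite{dk14} and \cite[\S~2]{eberly15}, so there is no in-paper argument to match; your write-up has to stand on its own. Your completeness argument (full column rank of $C=A_{*,\{c_1,\dots,c_r\}}$ forces a unique solution of $C\xi=v$, hence $\beta=\alpha$) and your cost accounting are fine and are what one would expect, modulo the fact that the challenge $v=A\alpha$ lives in $\F^{m}$, so the communication is really $m+2r$ (a discrepancy already present in the statement).

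The genuine gap is in the soundness argument, exactly at the step you label ``delicate'' and defer to the literature --- which is the only nontrivial part of the theorem, so deferring it means soundness is not proved. Moreover, the claim you would need is not merely a matter of pinning down a constant: it is false as stated. You assert that, conditioned on $v$, the realised $\alpha$ is one of ``roughly $|S|$ equally likely preimages'' on the line $a+t\kappa$. But the set $\{t\in\F:\ a+t\kappa\in(S^*)^r\}$ always contains $t=0$ and can be exactly $\{0\}$: once $\kappa$ has two nonzero coordinates whose ratio is generic with respect to the difference set of $S^*$, the two constraints $a_i+t\kappa_i\in S^*$ and $a_j+t\kappa_j\in S^*$ are simultaneously satisfiable only at $t=0$. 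Concretely, over $\F_7$ take $r=2$, $C_{*,1}=2\,C_{*,2}\neq 0$ and $S^*=\{1,3\}$: then $v$ is determined by $2a_1+a_2$, which takes the four distinct values $3,5,0,2$ on $(S^*)^2$, so the challenge map is injective on the sample space and a Prover who committed these two \emph{dependent} columns recovers $\alpha$ from $v$ with probability $1$, not at most $1/|S|$. Even when the fibers are all large, averaging over $v$ gives a success probability of $|C((S^*)^r)|/|S^*|^r$, which already equals about $2/|S^*|$ for two equal columns. A correct proof must bound the number of distinct challenges $|C((S^*)^r)|$ against $|S^*|^r$ --- e.g.\ via $|C(\F^r)|\leq|\F|^{\,r-1}$ when $S^*$ is essentially all of $\F^*$ --- or reproduce the precise conditioning argument of \cite{dk14,eberly15}; the heuristic ``one correct preimage out of about $|S|$'' does not survive the restriction of the kernel coset to the box $(S^*)^r$.
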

We now consider a column rank profile certificate: the Prover is given a matrix $A$, and answers
the column rank profile of $A$, $\mathcal{J} = (c_1, \dots, c_r)$. 
In order to certify this column rank profile, we need to certify two properties: 
\begin{compactenum}
\item the columns given by $\mathcal{J}$ are linearly independent; \label{CRP:step:indep}
\item the columns given by $\mathcal{J}$ form the lexicographically smallest set \label{CRP:step:minimal}
  of independent columns of $A$.
\end{compactenum}

Property~\ref{CRP:step:indep} is verified by Certificate~\ref{cert:lower_rank},
as 
it checks wether a set of columns are indeed linearly independent.
Property~\ref{CRP:step:minimal} could be certified by successive applications of Certificate~\ref{cert:upper_rank}: at step $i$, checking that the rank of $A_{*, (0, \dots, c_i - 1)}$
is at most $i-1$ would certify that there is no column located between $c_{i-1}$ and $c_i$ in $A$ 
which increases the rank of $A$. Hence, it would prove  the minimality of $\mathcal{J}$.
However, this method requires $O(nr)$  communication space.

Instead, we reduce these communication by seeding all challenges from
a single $n$ dimensional vector, and by compressing the responses with a random
projection.
The right triangular equivalence certificate  plays here a central role,
ensuring the lexicographic minimality of $\mathcal{S}$.
More precisely, the Verifier chooses a vector $v \in \F^{n}$ uniformly at random
and sends it to the Prover.
Then, for each index $c_k\in\mathcal{S}$ the Prover computes the linear
combination of the first $c_k-1$  columns of $A$ using the first $c_k-1$
coefficients of $v$ and has to prove that it can be generated from the $k-1$
columns $c_1,\dots,c_{k-1}$. This means, find  a vector $\gamma^{(k)}$ solution to the system:
$$
\begin{bmatrix}
  A_{*,c_1} &   A_{*,c_2} & \dots &   A_{*,c_{k-1}}  
\end{bmatrix}
\gamma^{(k)}
=
A 
\begin{smatrix}
  v_1\\\vdots\\ v_{c_k-1}\\0\\\vdots
\end{smatrix}.
$$

Equivalently, find a strictly upper triangular matrix $\Gamma$ such that:
$$
\begin{bmatrix}
  A_{*,c_1} &   A_{*,c_2} & \dots &   A_{*,c_{r-1}}  
\end{bmatrix}
\Gamma =
A
\underbrace{\begin{smatrix}
  v_1&v_1 & \cdots & \cdots&v_1\\
  \vdots&\vdots&  \vdots& \vdots&\vdots\\
  v_{c_1-1}&\vdots & \vdots&\vdots&\vdots\\
  0 & v_{c_2-1}&\vdots& \vdots&\vdots\\
  0 & 0 & \ddots & \vdots&\vdots\\
  0 & 0 & 0 & v_{c_r-1}&\vdots\\
  0 & 0 & 0 & 0&v_{n}
\end{smatrix}}_{V}.
$$

Note that $V=\text{Diag}(v_1,\dots,v_n)W$ where $W = [\mathds{1}_{i<c_{j+1}}]_{i,j}$ (with $c_{r+1}=n+1$ by convention)
In order to avoid having to transmit the whole $r\times r$ upper triangular matrix
$\Gamma$, the Verifier only checks a random projection $x$ of it, using the
triangular equivalence Certificate~\ref{cert:triangular_equivalence}.
We then propose the certificate in Figure~\ref{cert:crp}.
\begin{figure}[htbp]
  \centering
  {\def\arraystretch{1.2}
    \begin{tabular}{|p{2.7cm} c l|}
        \hline
        Prover & & Verifier \\
        & $A \in \mathbb{F}^{m\times n}$ & \\
        \hdashline\rule{0pt}{12pt}
        $(c_1,.., c_r)$ CRP of $A$ & $\xrightarrow{\text{$(c_1,.., c_r)$}}$ &
        $\rank{A} \checks{\geq} r$  by Cert.~\ref{cert:lower_rank}\\
        \hdashline
        & & Choose $S \subset \mathbb{F}$ \\
        & $\xleftarrow{v}$ & $v \sample{S^{n}}$\\
        $V=\text{Diag}(v_i)W$
        &&    $W = [\mathds{1}_{i<c_{j+1}} ]
        $\\

        $\Gamma$ upper tri.  s.t. $A_{*,\{c_1,..,c_r\}}\Gamma = AV$ & &$D\leftarrow \text{Diag}(v_i)$\\
        \hdashline
        $y = \Gamma x$ 
        & $ \xleftrightarrow{x \text{  (Cert.~\ref{cert:triangular_equivalence})  } y}$& $x \sample{S^{r}}$ \\
        \hdashline
 && $z \leftarrow D(Wx)$\\
 && $z_{c_j}\leftarrow{}z_{c_j}{-}y_j, j=1..r$\\
 && $Az\checks{=}0$\\
        \hline
    \end{tabular}
    \caption{Certificate for the column rank profile}
    \label{cert:crp}
    }
\vspace{\spaceafterprotocol}\end{figure}
\begin{theorem}
    For $A \in \F^{m \times n}$ and $S \subset \mathbb{F}$,
    certificate~\ref{cert:crp} is sound, with probability larger than
    $1-\frac{1}{|S|}$, 
    perfectly complete, with a Prover
    computational cost bounded by $O(mnr^{\omega - 2})$,
    a communication space complexity bounded by $2n+4r$
    and a Verifier cost bounded by $2\mu(A)+n+3r$.
\end{theorem}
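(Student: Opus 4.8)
The plan is to decompose the correctness into the two properties listed just before the protocol: that the columns $(c_1,\dots,c_r)$ are linearly independent, and that they form the lexicographically minimal such set. The first property is handled directly by invoking Theorem~\ref{th:lower_rank}, which certifies $\rank(A)\geq r$ from the independence of the committed columns. So the bulk of the argument is to show that the final check $Az=0$, together with the triangular-equivalence subprotocol, is equivalent to the minimality of the index set.

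First I would unwind what the Verifier actually tests. The matrix $V=\mathrm{Diag}(v_i)W$ is designed so that its $j$-th column is the vector $\begin{smatrix}v_1&\cdots&v_{c_{j+1}-1}&0&\cdots\end{smatrix}^T$, truncating $v$ just before the next pivot. The existence of a \emph{strictly upper triangular} $\Gamma$ with $A_{*,\{c_1,\dots,c_r\}}\Gamma=AV$ is exactly the statement that, for each $k$, the truncation of $Av$ to its first $c_{k}-1$ coordinates lies in the span of the earlier pivot columns $c_1,\dots,c_{k-1}$ — i.e. that no column strictly between consecutive pivots raises the rank. This is precisely minimality of the lexicographic sequence. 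The strict upper-triangularity of $\Gamma$ is the structural constraint that forces the $k$-th combination to use only the first $k-1$ pivot columns, which is why the right triangular equivalence certificate is the right tool: by Theorem~\ref{th:triangular_equivalence} it certifies the existence of such an upper triangular $\Gamma$ without communicating it, instead transmitting only the random projection $y=\Gamma x$ for a Verifier-chosen $x$. I would then verify that the Verifier's closing computation $z\leftarrow D(Wx)$ followed by $z_{c_j}\leftarrow z_{c_j}-y_j$ and the test $Az=0$ is algebraically identical to the equivalence check $A_{*,\{c_1,\dots,c_r\}}y = AVx$ of Certificate~\ref{cert:triangular_equivalence}, since $AVx = A\,\mathrm{Diag}(v_i)Wx = A\,D(Wx)$ and $A_{*,\{c_1,\dots,c_r\}}y=\sum_j A_{*,c_j}y_j$; the subtraction of $y_j$ at position $c_j$ folds both sides into the single vector $z$ whose image must vanish.

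\emph{Completeness and soundness} then follow by composition. For completeness, an honest Prover genuinely has independent pivot columns and a valid strictly upper triangular $\Gamma$ (minimality guarantees the needed containments), so both subprotocols accept with certainty. For soundness, Theorem~\ref{th:lower_rank} catches a Prover whose committed columns are dependent, and Theorem~\ref{th:triangular_equivalence} catches — with probability at least $1-\tfrac{1}{|S|}$ over the choice of $x$ — any attempt to pass off a set that is not lexicographically minimal, since then no strictly upper triangular $\Gamma$ exists and the projected identity fails. The two failure events are governed by the same bound $1-\tfrac{1}{|S|}$. \emph{Complexity} is additive over the components: the Prover performs one PLUQ-type elimination plus a triangular solve for $\Gamma$, giving $O(mnr^{\omega-2})$; communication is the rank bound certificate's $n+2r$ plus $v$ (size $n$) and the $2r$ of the equivalence exchange, totalling $2n+4r$; and the Verifier's cost is $\mu(A)+r$ from Theorem~\ref{th:lower_rank} together with $\mu(A)$ for the $Az$ evaluation plus $O(n+r)$ for assembling $z$, yielding $2\mu(A)+n+3r$.

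I expect the main obstacle to be the bookkeeping identification of the Verifier's three-line closing computation with the abstract equivalence check $A_{*,\{c_1,\dots,c_r\}}y=AVx$, and in particular checking that the strict upper-triangularity of $\Gamma$ (as opposed to merely upper triangular) is what pins down the combination for index $c_k$ to the columns strictly before it — this is the step that converts "triangular equivalence" into "lexicographic minimality," and it must be argued carefully rather than asserted.
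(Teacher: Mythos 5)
Your proposal follows essentially the same route as the paper's proof: reduce independence to Theorem~\ref{th:lower_rank}, reduce lexicographic minimality to the existence of an upper triangular $\Gamma$ with $A_{*,\{c_1,\dots,c_r\}}\Gamma=AV$ certified by Theorem~\ref{th:triangular_equivalence} on a random projection, identify the Verifier's closing computation with $A(D(Wx))-A_{*,\{c_1,\dots,c_r\}}y=0$, and sum the costs of the components (the paper additionally spells out the case where $r$ is not the true rank, which your convention $c_{r+1}=n+1$ covers implicitly). The argument and the complexity accounting match the paper's.
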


\begin{proof}
    If the Prover is honest, the protocol corresponds first to an application of
    Theorem~\ref{th:lower_rank} to certify that $\mathcal{J}$ is a set of
    independent columns. This certificate is perfectly complete.
    Second the protocol also uses challenges from
    Certificate~\ref{cert:upper_rank}, 
    which is perfectly complete, together with
    Certificate~\ref{cert:triangular_equivalence}, which is perfectly complete
    as well. The latter certificate is used on $A_{*, \mathcal{J}}$, a
    regular submatrix, as $\mathcal{J}$ is a set of independent columns of $A$.
    The final check then corresponds to
    $A(D(Wx)) -A_{*,\{c_1,..c_r\}}y \stackrel{?}{=} 0$ and, overall,
    Certificate~\ref{cert:crp} is perfectly complete. 

    If the Prover is dishonest, then either the set of  columns in $\mathcal{J}$ are  not
    linearly 
    independent, which will be caught by the Verifier with probability at least 
    $1 - \frac{1}{|S|}$, from Theorem~\ref{th:lower_rank},
    or
    $\mathcal{J}$ is not lexicographically minimal, or the rank of $A$ is not $r$.
    If the rank is wrong, it will not be possible for the prover to find a
    suitable $\Gamma$. This will be caught by the verifier with probability 
    $1-\frac{1}{|S|}$, from 
    Theorem~\ref{th:triangular_equivalence}.
    Finally, if $\mathcal{J}$ 
    is not lexicographically minimal, there exists at least one column 
    $c_k \notin    \mathcal{J}, c_i < c_k < c_{i+1}$ for some fixed 
    $i$ such that $\{c_1, \dots, c_i\} \cup \{c_k\}$ form a set of linearly
    independant columns of $A$.  
    This means that $\rank(A_{*, 1, \dots, c_{i+1} - 1}) = i+1$, whereas it was
    expected to be $i$. 
    Thus, the prover cannot reconstruct a suitable triangular $\Gamma$ and this
    will be detected by the verifier also with probability $1-\frac{1}{|S|}$, as
    shown in Theorem~\ref{th:triangular_equivalence}).

    The Prover's time complexity is that of computing a $PLUQ$ decomposition of
    $A$.    
    The transmission of $v, x$ and $y$ yields a 
    communication space of $n+2r$. Finally, in addition to
    Protocol~\ref{cert:lower_rank}, the Verifier computes $Wx$ as a prefix sum with $r-1$
    additions, multiplies it by $D$, then substracts $y_i$ at the $r$
    correct positions and finally multiplies by $A$ for a total cost bounded by
    $2\mu(A)+n+3r-1$.
\end{proof}


\subsection{Rank profile matrix certificate}
\label{sec:interractive:RPM}

We propose an interactive certificate for the rank profile matrix based
on~\cite[Algorithm~4]{dps16}: first computing the row and column support of the
rank profile matrix, using Certificate~\ref{cert:crp} twice for the row and
column rank profiles, then computing the rank profile matrix of the invertible
submatrix of $A$ lying on this grid.

In the following we then only focus on a certificate for the rank profile
matrix of an invertible matrix. It relies on an LUP decomposition that reveals
the rank profile matrix. From Theorem~\ref{lem:echelonized}, this is the case if
and only if $P^TUP$ is upper triangular. 
Protocol~\ref{cert:RPM:linearcomm} thus gives an interactive certificate that combines
Certificate~\ref{cert:pldu} for a LDUP decomposition with a certificate that $P^TUP$ is upper
triangular. The latter is achieved by
Certificate~\ref{cert:triangular_equivalence} showing that $P^T$ and $P^TU$ are
left upper triangular equivalent, but since $U$ is unknown to the Verifier, the
verification is done on a random right projection with the vector $\phi$  used in Certificate~\ref{cert:pldu}.
  \begin{figure}[htbp]
    {
      \centering
  {\def\arraystretch{1.2}
    \begin{tabular}{|p{2.2cm} c l|}
        \hline
        Prover & & Verifier \\
        \multicolumn{3}{|c|}{$A \in \mathbb{F}^{n\times n}$ invertible} \\
        \hdashline
          $A=LDUP$, with $P=\RPM{A}$
        &
        $\xrightarrow{\ P,D \ }$ &$P \checks{\in}\mathcal{S}_n$, $D\checks{\in}\mathcal{D}_n(\F)$\\
        \hline
        \multicolumn{3}{|c|}{Protocol~\ref{cert:triangular_equivalence} :
          {\small $P^T$ and $P^TU$ are left up. tri. equiv. with random proj.}}\\
        \hdashline
        $\overline{U}\leftarrow P^TUP$ & $\xrightarrow{\overline{U} \text{ is upper tri.}}$&\\
        \hdashline
        &&Choose $S\subset \F$\\ 
        &$\xleftarrow{e_i}$&\textbf{for } $i=1..n$, $e_i\sample{S}$\\
        $f^T\leftarrow e^T\overline{U}$ &  $\xrightarrow{f_i}$& \\
        &  
        $\begin{array}{|c|}
          \hline
          \text{\scriptsize Protocol~\ref{cert:pldu} on A}\\
               \hdashline
               \xleftarrow{ \begin{smatrix} \widetilde{\phi}&\widetilde{\psi} \end{smatrix}
               } \\
               \xrightarrow{\begin{smatrix}\widetilde{x}&\widetilde{y}\end{smatrix}} \\
               \hline
        \end{array}$
        & $\phi,\psi\sample{S^n}$   \\
        && Now $
        \begin{bmatrix} x&y \end{bmatrix} =U \begin{bmatrix} \phi&\psi \end{bmatrix}
        $\\
        \hdashline
        && $ e^TP^T x
        \checks{=}  f^T P^T   \phi
        $  \\
        \hline
    \end{tabular}
    \caption{Rank profile matrix of an invertible matrix}
    \label{cert:RPM:linearcomm}
  }
  }
\vspace{\spaceafterprotocol}\end{figure}

  \begin{theorem}
    Protocol~\ref{cert:RPM:linearcomm} is sound, with probability greater than
    $1-\frac{2}{|S|}$, and perfectly complete. The Prover cost is $\GO{n^\omega}$
    field operations, the communication space is bounded by $10n$ and the
    Verifier cost is bounded by $\mu(A)+16n$.
  \end{theorem}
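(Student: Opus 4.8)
The plan is to prove the three properties—perfect completeness, soundness, and the complexity bounds—by reducing to the two toolbox certificates this protocol embeds: Protocol~\ref{cert:pldu} (together with Proposition~\ref{cor:uniqueXY}) for the $LDUP$ structure, the left triangular equivalence of Certificate~\ref{cert:triangular_equivalence} for the upper-triangularity of $\overline{U}=P^TUP$, and the criterion of Lemma~\ref{lem:echelonized} specialised to an invertible matrix. For perfect completeness I would observe that an honest Prover commits $P=\RPM{A}$ and $D$ from a genuine $A=LDUP$ with $L,U$ unit triangular. By Lemma~\ref{lem:echelonized} in the invertible case ($r=n$), revealing $\RPM{A}=P$ is equivalent to $\overline{U}=P^TUP$ being upper triangular, so the Prover can consistently answer $f^T=e^T\overline{U}$ coordinate by coordinate. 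Since Protocol~\ref{cert:pldu} is perfectly complete the reconstructed vector satisfies $x=U\phi$, and a direct computation closes the final test: $e^TP^Tx=e^TP^TU\phi$, while $f^TP^T\phi=e^T(P^TUP)P^T\phi=e^TP^TU\phi$ using $PP^T=I$.

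The core of the argument is soundness, which I would carry out in three steps. First, invoke the soundness of Protocol~\ref{cert:pldu} (Theorem~\ref{thm:pldu}): passing it forces the committed $P,D$ to come from a decomposition $A=LDUP$ with $L,U$ unit triangular; since $A$ is invertible and $P,D$ are fixed, $AP^T=LDU$ has generic rank profile and $U$ is the unique unit upper triangular factor of $AP^T$. By Proposition~\ref{cor:uniqueXY} the reconstructed $x$ moreover equals $U\phi$, except with probability at most $2/|S|$. Second, substituting $x=U\phi$ into the final test turns it into $(e^T\overline{U}-f^T)P^T\phi=0$; because $f$ is committed before the Verifier draws $\phi$, this is a fixed linear form in the random $\phi$, and as in the proofs of Theorem~\ref{th:GRP} and Theorem~\ref{th:triangular_equivalence} passing it forces the identity $f^T=e^T\overline{U}$ (the coefficient vector must vanish, $P^T$ being invertible). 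Third, exploit the interaction pattern exactly as in Theorem~\ref{th:triangular_equivalence}: each $f_j$ is returned before $e_{j+1},\dots,e_n$ are released, hence is independent of them; were $\overline{U}$ not upper triangular, some $\overline{U}_{kj}\neq 0$ with $k>j$ would make the honest coordinate $(e^T\overline{U})_j=\sum_i \overline{U}_{ij}e_i$ genuinely depend on $e_k$, contradicting the previous step. Therefore $\overline{U}=P^TUP$ is upper triangular, and Lemma~\ref{lem:echelonized} yields $\RPM{A}=P$, so the commitment is correct.

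I expect the delicate point to be the probability bookkeeping linking these steps: the same challenge $\phi$ is used both to certify $x=U\phi$ through Protocol~\ref{cert:pldu} and to form the random projection tying $f$ to $\overline{U}$. The clean way to keep the overall bound at $2/|S|$, rather than a looser union $3/|S|$, is to run the soundness analysis symbolically, replacing $\phi$ by algebraically independent indeterminates $\Phi$: Proposition~\ref{cor:uniqueXY} forces $x=U\Phi$ as a polynomial identity, after which the final test is itself a polynomial identity in $\Phi$ whose failure is detected by the very same draw of $\phi$ already accounted for. I would phrase the combined bad event so that this single $2/|S|$ term dominates.

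Finally, for the complexities, the Prover performs one $LDUP$ (i.e.\ PLUQ) elimination in $O(n^\omega)$ and then only $O(n^2)$ work: the strictly triangular solves of Protocol~\ref{cert:pldu}, forming $\overline{U}=P^TUP$ by a symmetric permutation, and computing $f^T=e^T\overline{U}$. The communication splits into the commitment $P,D$ ($2n$), the triangular exchange $e,f$ ($2n$), and the six strictly triangular vectors of Protocol~\ref{cert:pldu} ($6(n-1)$), for a total of at most $10n$. The Verifier inherits the $\mu(A)+12n+o(n)$ cost of the embedded Protocol~\ref{cert:pldu} (the single application of $A$ arising from $\lambda^TA$) and adds only the two permuted dot products $e^TP^Tx$ and $f^TP^T\phi$ together with the membership tests $P\in\mathcal{S}_n$ and $D\in\mathcal{D}_n(\F)$, i.e.\ $O(n)$ extra operations and no further application of $A$, bounding the Verifier cost by $\mu(A)+16n$.
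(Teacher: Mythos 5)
Your proposal is correct and follows essentially the same route as the paper's proof: Proposition~\ref{cor:uniqueXY} forces $x=U\phi$, the final check then reduces to $(e^T\overline{U}-f^T)P^T\phi=0$ so that a random $\phi$ forces $f^T=e^T\overline{U}$, and the interleaved release of the $e_i$ forces $\overline{U}=P^TUP$ to be upper triangular, giving the $2/|S|$ bound. Your explicit completeness computation, the symbolic handling of the shared challenge $\phi$, and the communication/Verifier accounting only flesh out details the paper leaves implicit.
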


\begin{proof}
  If the Prover is dishonest and  $\overline{U}=P^TUP$ is not upper triangular,
then let $(i,j)$ be the lexicographically minimal coordinates such that $i>j$ and
$\overline{U}_{i,j}\neq 0$.
Now either
$\begin{bmatrix} x&y  \end{bmatrix} \neq U \begin{bmatrix} \phi&\psi  \end{bmatrix}$,
and the verification will then fail to detect it with probability less than $\frac{2}{|S|}$, from
Proposition~\ref{cor:uniqueXY}.
Or one can write
$e^TP^T  x- f^TP^T \phi = (e^T\overline{U}-f^T) P\phi=0$.
  If
  \begin{equation}\label{eq:qtuq}
    e^TP^TUP-f^T=0.
  \end{equation}
  is not satisfied, then a random $\phi$ will fail to detect it
  with probability less than $\frac{1}{|S|}$, since
$e,\overline{U}$ and $f$ are set before  choosing for $\phi$.
 At the time of commiting $f_j$, the value of $e_i$
  is still unknown, hence $f_j$ is constant in the symbolic variable
  $E_i$. Thus the $j$-th coordinate in~\eqref{eq:qtuq} is a nonzero polynomial
  in $E_j$ and therefore vanishes with probability $1/|S|$ when sampling the
  values of $e$ uniformly. Hence, overall if $P^TUP$ is not upper triangular, the
  verification will fail to detect it with probability at most $2/|S|$.
 \end{proof}

Finally, the rank profile matrix of any matrix, even a singular one, can thus be
verified with two applications of Certificate~\ref{cert:crp} (one for the row
rank profile and one for the column rank profile, themselves calling
Certificate~\ref{cert:lower_rank} only once), followed by
Certificate~\ref{cert:RPM:linearcomm} on the $r{\times}r$ selection of
lexicographically minimal independent rows and columns.  
Overall this is $4\mu(A)+2n+21r$ operations for the Verifier, and $3n+16r$
communications. 

\bibliographystyle{abbrvurl}
\makeatletter
\def\thebibliography#1{%
\ifnum\addauflag=0\addauthorsection\global\addauflag=1\fi
     \section[References]{
        {References} 
          \vskip -9pt  
         \@mkboth{{\refname}}{{\refname}}%
     }%
     \list{[\arabic{enumi}]}{%
         \settowidth\labelwidth{[#1]}%
         \leftmargin\labelwidth
         \advance\leftmargin\labelsep
         \advance\leftmargin\bibindent
         \parsep=0pt\itemsep=1pt 
         \itemindent -\bibindent
         \listparindent \itemindent
         \usecounter{enumi}
     }%
     \let\newblock\@empty
     \raggedright 
     \sloppy
     \sfcode`\.=1000\relax
     \vspace*{0.5mm}
     \small
}
\makeatother

\bibliography{rpm_certificate}

\end{document}